\documentclass[11pt,notitlepage]{article}

\usepackage{booktabs} 


\usepackage{fullpage}
\usepackage{theorem}
\usepackage{graphicx}
\usepackage{subfig}
\usepackage{float}
\usepackage{afterpage}
\usepackage{amsmath,amssymb}
\usepackage{color}
\usepackage{url}
\usepackage{bbm}
\usepackage{epsfig}
\usepackage{epstopdf}

\usepackage{multirow}
\usepackage{array}
\usepackage[mathscr]{eucal}
\usepackage{chngpage}
\usepackage{amsmath}

\newtheorem{theorem}{Theorem}
\newtheorem{lemma}{Lemma}

\newtheorem{remark}{Remark}

\newcommand{\abs}[1]{\left| #1 \right|}

\newcommand{\E}{\mathcal{E}}
\newcommand{\var}{\mathbf{Var}}
\newcommand{\cov}{\mathbf{Cov}}

\renewcommand{\Pr}{\mathbf{Pr}}

\newcommand{\D}{\mathcal{D}}
\renewcommand{\O}{\mathcal{O}}

\renewcommand{\S}{\mathcal{S}}
\renewcommand{\P}{\mathcal{P}}

\renewcommand{\E}{\mathbf{E}}

\newcommand{\genoa}{{\tt GEN50kS}}

\newcommand{\genob}{{\tt GEN20kS}}
\newcommand{\genoc}{{\tt GEN20kM}}
\newcommand{\genod}{{\tt GEN20kL}}
\newcommand{\genoe}{{\tt GEN80kS}}
\newcommand{\genof}{{\tt GEN320kS}}
\newcommand{\trec}{{\tt TREC}}
\newcommand{\uniref}{{\tt UNIREF}}

\newcommand{\ebdjoin}{{\tt EmbedJoin}}

\newcommand{\pass}{{\tt PassJoin}}

\newcommand{\qchunk}{{\tt QChunk}}
\newcommand{\vchunk}{{\tt VChunk}}
\newcommand{\mjoin}{{\tt MinJoin}}

\newcommand{\ED}{\text{ED}}

\renewcommand{\paragraph}[1]{\medskip \noindent {\bf #1.}}

\newenvironment{proof}{{\noindent \bf Proof:}}{\hfill$\square$}

\usepackage{algpseudocode}
\usepackage{algorithmicx}
\usepackage{algorithm}

\algnewcommand\algorithmicforeach{\textbf{for each}}
\algdef{S}[FOR]{ForEach}[1]{\algorithmicforeach\ #1\ \algorithmicdo}

\usepackage{multirow}
\usepackage{array}
\usepackage[mathscr]{eucal}
\usepackage{chngpage}

\usepackage{authblk}
\usepackage{blindtext}

\begin{document}

\title{MinJoin: Efficient Edit Similarity Joins \\via Local Hash Minima\thanks{Authors are supported in part by NSF CCF-1525024, IIS-1633215 and CCF-1844234.}}

\author{Haoyu Zhang\thanks{Email: hz30@umail.iu.edu} }
\author{Qin Zhang\thanks{Email: qzhangcs@indiana.edu} }
\affil{Department of Computer Science, Indiana University}
\date{}

\maketitle

\begin{abstract}
We study the problem of computing similarity joins under edit distance on a set of strings.  Edit similarity joins is a fundamental problem in databases, data mining and bioinformatics. It finds important applications in data cleaning and integration, collaborative filtering, genome sequence assembly, etc.  This problem has attracted significant attention in the past two decades. However, all previous algorithms either cannot scale well to long strings and large similarity thresholds, or suffer from imperfect accuracy. 

In this paper we propose a new algorithm for edit similarity joins using a novel string partition based approach.  We show mathematically that  with high probability our algorithm achieves a perfect accuracy, and runs in linear time plus a data-dependent verification step.  Experiments on real world datasets show that our algorithm significantly outperforms the state-of-the-art algorithms for edit similarity joins, and achieves perfect accuracy on all the datasets that we have tested.
\end{abstract}

\section{Introduction}
\label{sec:intro}

Edit similarity joins is a fundamental problem in the database and data mining literature, and finds numerous applications in data cleaning and integration, collaborative filtering, genome sequence assembly, etc.  In this problem we are given a set of strings $\{s_1, \ldots, s_n\}$ and a distance threshold $K$, and asked to output all pairs of strings $(s_i, s_j)$ such that $\ED(s_i, s_j) \le K$, where $\ED(\cdot, \cdot)$ is the edit distance function, which is defined to be the minimum number of insertions, deletions and substitutions to transfer one string to another.  There is a long line of research on edit similarity joins~\cite{GJKMS01,AGK06,BMS07,BHSH07,LLL08,XWL08,WLF10,QWL11,WQX13,LDW11,WLF12}.  

A major challenge for most existing algorithms, as pointed out by the recent work \cite{ZZ17}, is that they do not scale well to long strings and large edit thresholds.  Long strings and large thresholds are critical for applications involving long sequence data such as big documents and DNA sequences, where a small threshold $K$ may just give zero output. 
For example, in the genome sequence assembly, in which the first step is to find all pairs of similar reads under edit distance, the third generation sequencing technology such as single molecule real time sequencing (SMRT)~\cite{Roberts2013} generates reads of 1,000-100,000 bps long with 12-18\% sequencing errors (i.e., percentage of insertions, deletions and substitutions).  Large threshold is also identified as the main challenge in a recent string similarity search/join competition~\cite{WDG14}, where it was reported that ``an error rate of 20\%-25\% pushes today’s techniques to the limit''.

Different from previous algorithms which are deterministic and return the exact answers, in \cite{ZZ17} the authors proposed a randomized algorithm named \ebdjoin\ which is more efficient on long strings and large thresholds.  However, the accuracy (more precisely, the {\em recall}, 
i.e., the number of pairs found by the algorithm divided by the total number of similar pairs; 
the {\em precision} of all algorithms discussed in this paper is always 100\%) of \ebdjoin\ is only  95\% - 99\% on a number of real-world datasets tested in \cite{ZZ17}. The imperfect accuracy is inherent to \ebdjoin\ which we shall explain shortly.  The main question we are going to address in this paper is: 
\begin{quote}
{\em Can we solve edit similarity joins efficiently on long string and large edit threshold while achieving perfect accuracy with a good probability?}
\end{quote}

\paragraph{Our Contribution}
We propose a novel randomized algorithm named \mjoin\ to address the above question. The high level framework of \mjoin\ is simple: it first partitions each string into a set of substrings, and then uses hash join on these substrings to find all pairs of strings that share at least one common substring.  At the end a verification step is used to remove all false positives. Our string partition scheme works as follows: We first assign each letter $\alpha$ in the string $s$ a value, which is a random hash value of the $q$-gram ($q$ is a value determined by the string length, the threshold $K$, and the size of the alphabet) starting from $\alpha$. We then determine the {\em anchors} of string $s$ using the following strategy: a letter $\alpha$ is an anchor if and only if its value is the smallest among all letters in a certain neighborhood of $\alpha$.  At the end we simply partition $s$ at all of its anchors. 

Via a rigorous mathematical analysis we can show that under our partition scheme, with a good probability, any pair of strings with edit distance at most $K$ will share at least one common partition. We can also show that this partition procedure runs in {\em linear} time.

We have verified the effectiveness of \mjoin\ by an extensive set of experiments.  Though in our experiments we do not include a parallel repetition step which is for the purpose of guaranteeing that our algorithm achieves perfect accuracy with high probability in theory (see the discussion in Section~\ref{sec:partition-analysis}), our experimental results show that \mjoin\ is able to achieve perfect accuracy on all datasets that were used in \cite{ZZ17}.  Moreover, \mjoin\ is faster than all existing exact (deterministic) algorithms by orders of magnitudes on datasets of long strings and large edit thresholds, and is also faster than \ebdjoin\ by a good margin.

\paragraph{Previous Work and Comparisons}
Many of the existing algorithms on edit similarity joins also follow the string partition framework. The performance of the algorithm is largely determined by the number of partitions generated for each string, and the number of queries made to the indices (e.g., hash tables) to search for similar strings. 

We discuss several state-of-the-art algorithms according to the experimental studies in \cite{JLFL14}.

\qchunk~\cite{QWL11} is an exact edit similarity join algorithm based on string partition. \qchunk\ first obtains a global order $\sigma$ of $q$-grams. It then partitions each string into a set of chunks with starting positions $1, q+1, 2q+1, \ldots$, and stores the first $K+1$ chunks (according to the order $\sigma$) in a hash table.  Next, for each string the algorithm queries the hash table with the string's first $N - (\lceil( N - K )/q \rceil - K) + 1$ $q$-grams according to $\sigma$ to check if there is any match, where $N$ is the string length.~\footnote{Alternatively, for each string we can store the first $N - (\lceil( N - K )/q \rceil - K) + 1$ $q$-grams in the hash table, and make queries with the first $K+1$ chunks.} 

\pass~\cite{LDW11} is another exact algorithm based on string partition.  The algorithm partitions each string $s$ into $K+1$ equal-length segments, and records the $i$-th segment into an inverted index $L_{\abs{s}}^i$.  Next, for each string the algorithm queries some of the inverted indices to find similar strings; the number of queries made for each string is $\Theta(K^3)$, which is $\Theta(N^3)$ when $K$ is a fixed percentage of $N$.

\vchunk~\cite{WQX13} is the one that is closest to \mjoin\ among all algorithms that we are aware of. In \vchunk\ each string is partitioned into at least $2K+1$ chunks of possibly different lengths, 

determined by a {\em chunk boundary dictionary (CBD)}. More precisely, each string is cut at positions of appearances of each word in CBD to obtain its chunks.  The CBD is data dependent and the optimal one is NP-hard to compute.  In \cite{WQX13} the authors proposed a greedy algorithm for computing a CBD in time $O(n^2 N^2 / K)$, where $n$ is the number of input strings, and $N$ is the maximum string length. 

The recently proposed algorithm \ebdjoin~\cite{ZZ17} uses a very different approach.  \ebdjoin\ first embeds each string from the edit distance metric space to the Hamming distance metric space, translating the original problem to finding all pairs of strings that are close under Hamming distance. It then uses Locality Sensitive Hashing to compute (approximate) similarity joins in the Hamming space.  However, the embedding algorithm employed by \ebdjoin\ has a worst case distance distortion $K$, which can be very large. Although in practice the distortion is much smaller, it still contributes a non-negligible percentage of false negatives which prevent a perfect accuracy.

Compared with these existing algorithms, \mjoin\ has the following major advantages.
\begin{itemize}
\item For each string \mjoin\ only generates $O(K)$ partitions, and makes the same amount of queries (for searching similar strings), which are significantly smaller than \qchunk\ and \pass.  

\item \mjoin\ can compute partitions of all strings in time $O(n N)$, i.e., linear in the input size, which is even faster than the computation of CBD in \vchunk.  

\item \mjoin\ is able to reach perfect accuracy on tested datasets, compared with 95\%-99\% of \ebdjoin.

\end{itemize}

\paragraph{A Comparison with MinHash Based Approach}  We would like to note that \mjoin\ is quite different from the folklore algorithm using MinHash, in which for each string we collect all its $q$-grams and hash them to numbers, and then pick the one with the smallest hash value as the signature for the subsequent hash join; to increase the accuracy we can pick multiple signatures using different hash functions for each string.  

To see the difference, in \mjoin\ the hash values of the $q$-grams are used to partition a string to substrings/signatures, while in the MinHash based approach the $q$-grams are the signatures themselves.  In \mjoin\ we set $q$ to be a small number (more precisely, $q = \Theta(\log_{\abs{\Sigma}} (N/K))$ where $\Sigma$ is the alphabet of the string) 
in order to make all $q$-grams distinct in every small neighborhood of the string. And one partition will give us all the signatures of the string. While in the MinHash based approach, it is not clear how to find the best combination of the value $q$ and the number of signatures (or, hash functions) to use, for the purpose of achieving a perfect accuracy under a small running time.  We are not aware of any theory for guiding the choices of $q$ and the number of signatures in the MinHash based approach for edit similarity joins. In Section~\ref{sec:exp-minhash} we will show experimentally that \mjoin\  significantly performs the MinHash based approach in both accuracy and running time.

\paragraph{More Related Work}
There is a large body of work on similarity joins under edit distance.  A large number of the existing algorithms fall into the category called the {\em signature-based} approach, in which we compute for each string a set of signatures, and then apply various filtering methods to those signatures to select a set of candidate pairs for verification.  All the string partition based algorithms that we have discussed can be thought as special cases of the signature-based approach.  Other algorithms in this category include {\tt GramCount} \cite{GJKMS01}, {\tt AllPair} \cite{BMS07}, {\tt FastSS} \cite{BHSH07}, 
{\tt ListMerger} \cite{LLL08}, {\tt EDJoin} \cite{XWL08}, and {\tt AdaptJoin} \cite{WLF12}.  

There are a few algorithms that use different approaches, including the embedding-based algorithm \ebdjoin\ discussed previously, the  tree-based algorithm {\tt M-Tree}~\cite{CPZ97}, the enumeration-based algorithm {\tt PartEnum}~\cite{AGK06}, and the trie-based algorithm {\tt TrieJoin}~\cite{WLF10}.  However, except \ebdjoin, others' performance is not as good as the best partition-based approaches.

\paragraph{Notations}  We have listed a set of notations to be used in this paper in Table~\ref{tab:notation}.

\begin{table}[t]
\centering
\scalebox{1}{
\begin{tabular}{|p{.1\textwidth}| p{.5\textwidth}| m{.1\textwidth}|} 
\hline
Notation & Definition\\ 
\hline
$[n]$ & $[n] = \{1, 2, \ldots, n\}$ \\
\hline
$K$ & edit distance threshold\\ 
\hline
$\S$ & set of input strings \\ 
\hline
$s_i$ & $i$-th string in $\S$ \\ 
\hline
$n$ & number of input strings, i.e.,  $n = \abs{\S}$\\ 
\hline
$\abs{s}$ & length of string $s$ \\ 
\hline
$s_{i..j}$ & substring of $s$ starting from the $i$-th \\ 
& letter to the $j$-th letter \\ 
\hline
$N$ & maximum string length \\ 
\hline
$\Sigma$ & alphabet of strings in $\S$ \\ 
\hline
$q$ & length of $q$-gram\\ 
\hline
$\Pi$ & random hash function $\Sigma^q \rightarrow(0,1) $\\ 
\hline
$T$ & number of targeted partitions; $T = \Theta(K)$\\ 
\hline
$r$ & radius for computing local minimum \\ 
\hline
\end{tabular}
}
\caption{Summary of Notations}
\label{tab:notation}
\end{table}

\section{A String Partition Scheme Using Local Hash Minima}
\label{sec:oblivious}
In this section we present the string partition algorithm and analyze its properties.

\subsection{The Algorithm}
\label{sec:partition-algo}

\begin{algorithm}[t]
\caption{Partition-String ($s, T, \Pi$)}
\label{alg:partition}
\begin{algorithmic}[1]
\Require Input string $s$, number of targeted partitions $T$,  random hash function $\Pi: \Sigma^q \rightarrow (0,1) $
\Ensure Partitions of $s$: $\P  = \{(pos, len)\}$, where $(pos, len)$ refers a substring of $s$ starting at the $pos$-th position with length $len$

\State  $\P \leftarrow \emptyset $
\State $A  = \{a_1, \dots, a_p\} \leftarrow$ Find-Anchor($s, T, \Pi$) \label{ln:partbasic-1}
\ForEach {$i \in [1, p - 1]$}
	\State $\P \leftarrow  \P \cup (a_{p}, a_{p + 1} - a_{p})$
\EndFor 
\end{algorithmic}
\end{algorithm}

\begin{algorithm}[t]
\caption{Find-Anchor($s, T, \Pi$)}
\label{alg:oblivious}
\begin{algorithmic}[1]
\Require Input string $s$, number of targeted substrings $T$,  random hash function $\Pi: \Sigma^q \rightarrow (0,1) $
\Ensure  The set of anchors $A$ on $s$

\State  $A \leftarrow \{1\} $
\State $r \leftarrow \lfloor \frac{|s| - q + 1 - T}{2T + 2} \rfloor$  \label{ln:r}
\State Initialize an empty array $h$ with $\abs{s} - q + 1$ elements
\ForEach {$i \in [|s| - q + 1]$}
	\State $h[i] \leftarrow \Pi(s_{i .. i + q - 1})$
\EndFor 
\ForEach {$i \in [1 + r, \abs{s} - q + 1 -  r]$} \label{ln:check-1}
\State $Label \leftarrow 1$  
	\ForEach {$j \in [i -  r, i + r] \text{ and } j \neq i$}  \label{ln:inner-1}
		\If{$h[i] \ge h[j]$}
			\State $Label \leftarrow 0$
			\State Exit the for loop  
		\EndIf
	\EndFor   \label{ln:inner-2}
	\If{$Label = 1$}
		\State $A \leftarrow  A \cup \{i\}$
	\EndIf
\EndFor  \label{ln:check-2}
\State $A \leftarrow  A \cup \{\abs{s}\}$ 
\end{algorithmic}
\end{algorithm}

We start by giving some high level ideas of our partition scheme.  As mentioned, in \mjoin\ we first partition each string to a set of substrings, and then find pairs of strings that share at least one common partition as candidates for verification.  Consider a pair of strings $x$ and $y\ (\abs{x} = \abs{y} = N)$ with edit distance $k$. Let $\rho : [N] \to [N] \cup \{\perp\}$ be the {\em optimal} alignment between $x$ and $y$, where $\rho(i) = j \in [N]$ means that either $x[i] = y[j]$ or $x[i]$ is substituted by $y[j]$ in the optimal transformation, and $\rho(i) = \perp$ means that $x[i]$ is deleted in the optimal transformation.  If we pick any $k$ indices $1 < i_1 < \cdots < i_k < N$ such that $\rho(i_\ell) \neq \perp\ (\ell \in [k])$, partition $x$ at indices $i_1, \ldots, i_k$ to $k+1$ substrings, and partition $y$ at indices $\rho(i_1), \ldots, \rho(i_k)$ to $k+1$ substrings, then by the pigeonhole principle $x$ and $y$ must share at least one common partition.  

Of course obtaining an optimal alignment between $x$ and $y$ before the partition is unrealistic.  Our goal is to partition each string independently, while still guarantee that with a good probability, any pair of similar strings will share at least one common partition.  

We present our partition algorithm in Algorithm~\ref{alg:partition} and Algorithm~\ref{alg:oblivious}.  Let us briefly describe them in words.  Algorithm~\ref{alg:partition} first calls Algorithm~\ref{alg:oblivious} to obtain all {\em anchors} (to be defined shortly) of the input string $s$, and then cuts $s$ at each anchor into a set of substrings.  To compute all anchors,  Algorithm~\ref{alg:oblivious} first hashes all the substrings of $s$ of length $q$ (i.e., $s[1..q], s[2..q+1], \ldots$) into values in $(0,1)$.  Now we have effectively transferred $s$ to an array $h[]$ of size $\abs{s} - q + 1$, with each coordinate taking a value in $(0,1)$.  We call a coordinate $i$ in $h[]$ a {\em local minimum} if its value is strictly smaller than all other coordinates within a distance $r$ of $i$ (for a pre-specified parameter $r$, call it the {\em neighborhood size}).  Algorithm~\ref{alg:oblivious} outputs the corresponding $i$-th letter in string $s$ as an anchor.
For convenience, in the rest of the paper we also call a local minimum coordinate in $h[]$ an anchor.  

We will show that for a pair of strings $x, y$, if they share a common substring $\sigma$ that is long enough, then there must be at least two letters $u, v$ in $\sigma$ such that $u$ and $v$ are two adjacent anchors in both $x$ and $y$, which means that if we use anchors to partition $x$ and $y$, then they must share at least one common partition.  On the other hand, we know that for two strings of length $N$ and edit distance at most $K$, they must share at least one common substring of length $(N-K)/(K+1)$.  Thus by properly choosing the neighborhood size $r$ (as a function of the string length and the number of targeted substrings $T$), we can guarantee that two similar strings will share at least one common partition.

\begin{table}[t]
\centering
\begin{tabular}{ |c|c|c|c|c|c| }
  \hline 
  $3$-gram & Value & $3$-gram & Value & $3$-gram &Value \\
  \hline
    CTA & $0.01$ & ACG & $0.39$  & GAA & $0.69$   \\
    \hline
 GCT &  $0.05$ & AAA  &$0.42$ & AAT & $0.74$   \\
    \hline
 TGC & $0.12$  & AAC  & $0.46$ & ATC & $0.77$   \\
    \hline
 TAA &  $0.21$  & CCT & $0.53$  & GTC & $0.83$   \\
 \hline
  ACC & $0.25$ & TCG & $0.58$  & TGG & $0.89$   \\
    \hline
  CGT & $0.31$  & ATC & $0.62$  & GGA & $0.91$   \\
 \hline
 GTG &  $0.33$  & CGA & $0.64$  & GCG & $0.97$   \\
 \hline
\end{tabular}
\caption{Hash values of $3$-grams}  
\label{tab:hashrank}
\end{table}

\paragraph{\bf A Running Example.}
Before analyzing Algorithm~\ref{alg:partition} we first give a running example.  Table~\ref{tab:hashrank} presents the hash values of all $3$-grams in $\S$ under the hash function $\Pi$.  Table~\ref{tab:input} presents a collection of input strings $\S = \{s_1$, $s_2$, $s_3$, $s_4$, $s_5 \}$ and their lengths. We want to find all pairs of strings with edit distance less than or equal to $K = 4$. Table~\ref{tab:partition} presents the partitions of strings obtained by Algorithm~\ref{alg:oblivious} under parameter $T = 3$. We also calculate the neighborhood size $r$ for each string based on its string length and the parameter $T$.

Considering string $s_1$ as an example, its $6$-th $3$-gram ``CTA'' has a smaller hash value than all its neighbors within distance $r = 2$ (i.e., ``TGC'', ``GCT'', ``TAA'', ``AAC''). Thus ``CTA'' is selected as an anchor of $s_1$. Same to the $14$-th $3$-gram ``CTA''. We then partition $s_1$ to $\{$ACGTG, CTAACGTG, CTAACGTA$\}$. We next find that the strings $s_1, s_2$ share a common partition ``CTAACGTG'',  $s_3, s_4$ share a common partition ``TCGAAT'', and $s_3, s_4, s_5$ share a common partition ``CGTCGAAT'', which give the following candidate pairs: $(s_1, s_2)$, $(s_3, s_4)$, $(s_3, s_5)$, $(s_4, s_5)$. After computing the exact edit distance of each pair, we output $(s_1, s_2)$, $(s_3, s_4)$, $(s_3, s_5)$ as the final answer (i.e., those whose edit distances are no more than $K = 4$).

\begin{table}[t]
\centering
\begin{tabular}{ |c|c|c| }
  \hline 
  ID & String & Length \\
  \hline
  $s_1$ & ACGTGCTAACGTGCTAACGTG & $21$ \\
  $s_2$ & AAACGTGCTAACGTGCTAACCT & $22$\\
  $s_3$ & TCGAATCGTCGAATCGTCGAA & $21$\\
  $s_4$ & TCGAATCGTCGAATCGTGGAA  & $21$\\
  $s_5$ & GTGCGAATCGTCGAATCGTCG & $21$\\
 \hline
\end{tabular}

\caption{Input strings}  
\label{tab:input}
\end{table}

\begin{table}[t]
\centering
\begin{tabular}{ |c|c|c| }
  \hline 
  ID & Partitions of string & $r$ \\
  \hline
  $s_1$ &  ACGTG, CTAACGTG, CTAACGTA  & $2$ \\
  $s_2$ & AAACGTG, CTAACGTG, CTAACCT  & $2$\\
  $s_3$ &  TCGAAT, CGTCGAAT, CGTCGAA & $2$\\
  $s_4$ &  TCGAAT, CGTCGAAT, CGTGGAA  & $2$\\
  $s_5$ &  GTGCGAAT, CGTCGAAT, CGTCG  & $2$\\
 \hline
\end{tabular}
\caption{Partitions of strings by Algorithm~\ref{alg:partition} ($T = 3$)}  
\label{tab:partition}
\end{table}

\paragraph{Discussions}
We would like to discuss two items in more detail.  First, we require the value of an anchor in the hash array $h[]$ to be {\em strictly} smaller than its $2r$ neighbors. The purpose of this is to reduce the number of false positives generated by periodic substrings with short periods; false positives will increase the running time of the verification step of the \mjoin\ algorithm.  In real world datasets, periodic substrings are often caused by  systematic errors, and may be shared among different strings.  For example, consider the following periodic substring on genome data ``\ldots AAAAAAAA \ldots'' produced by sequencing errors, if we allow the  value of an anchor to be equal to its neighbors, then we may have many anchors in this substring.  Consequently, two strings both containing such a substring will be considered as a candidate pair even that they are very different elsewhere.

Second, we use different neighborhood size $r$ for strings of different lengths. More precisely, we set $r = \lfloor \frac{\abs{s}-q+1-T}{2T + 2} \rfloor$ where $T = \Theta(K)$ is an input parameter standing for the number of targeted partitions. The purpose of doing this, instead of choosing a fixed $r$ for all strings, is again to reduce false positives. Indeed, if we choose the same $r$ for all strings, then long strings will generate many partitions, since in order to achieve perfect accuracy we cannot set $r$ to be too large at the presence of short strings. Consequently, the large number of partitions generated by long strings will contribute to many false positives.  

This is in contrast to \vchunk, who cuts the string whenever it finds a word in CBD appearing on the string. Consequently two strings of very different length but sharing a relatively long substring are likely to be considered as a candidate pair, producing a false positive for the verification.

\subsection{The Analysis}
\label{sec:partition-analysis}

We now analyze the properties of Algorithm~\ref{alg:partition}.  Our goal is to understand how many partitions Algorithm~\ref{alg:partition} will generate (which will contribute to the running time of \mjoin\ as we shall see in Section~\ref{sec:mjoin}), and what is the probability for two similar strings to share a common partition.  

To keep the analysis clean, we assume that in any $r$-neighborhood of the array $h[]$ all the coordinates are distinct, which is true if (1) we assume that all corresponding $q$-grams are different, and (2) the hash function $\Pi : \Sigma^q \to (0,1)$ does not produce a collision when applying to $q$-grams.  The later can be easily satisfied if we keep an $O(\log N)$-bit precision ($N$ is the maximum string length) in the range of $\Pi$, in which case there is no hash collision with probability $1 - 1/N^{\Omega(1)}$.  For the former, we set $q = 3 \log_{\abs{\Sigma}} (N/T)$.  Note that by our choice of $r$ we have $r \approx N/(2T)$.  If all letters in a substring of size $r$ are random, then the probability that two $q$-grams in this substring are the same is $1/\abs{\Sigma}^q = \left(\frac{T}{N}\right)^3$.  By a union bound with probability $1 - o(1)$ all $q$-grams in a substring of size $2r$ are different.  We emphasize that this assumption is only used for the convenience of the analysis, and Algorithm~\ref{alg:partition} works without this constraint.

The following lemma states that the number of anchors produced by Algorithm~\ref{alg:oblivious} is concentrated around $T$, the number of targeted partitions.  

\begin{lemma}
\label{lem:oblivious-partition}
Given an input string and a parameter $T$, for any $c > 0$, the number of anchors generated by Algorithm~\ref{alg:oblivious}, denoted by $X$, satisfies
$ \Pr[\abs{X - T} \ge \sqrt{cT}] < 1/c.
$
\end{lemma}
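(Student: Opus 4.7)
The plan is to express $X$ as a sum of indicator variables and apply Chebyshev's inequality. Set $L = |s| - q + 1$, and for each interior position $i \in [1+r, L-r]$ let $Y_i$ be the indicator that $h[i]$ is strictly smaller than every other value in the window $[i-r, i+r]$. Then $X = 2 + \sum_i Y_i$, where the additive $2$ accounts for the deterministic endpoint anchors $1$ and $|s|$ inserted by Algorithm~\ref{alg:oblivious}. Under the clean-hash assumption stated just before the lemma, the $2r+1$ values in any radius-$r$ window are exchangeable, so $\E[Y_i] = 1/(2r+1)$; substituting $r = \lfloor (L-T)/(2T+2) \rfloor$ into the sum gives $\E[X] = T + O(1)$.

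The core task is to bound $\var[X] = \sum_{i,j} \cov[Y_i, Y_j]$, which I would split according to the distance $d = |i - j|$. The diagonal $d = 0$ contributes $\sum_i \var[Y_i] \le L/(2r+1) = \Theta(T)$. For $1 \le d \le r$ both centers sit inside the other's window, so a strict local minimum at one precludes one at the other, forcing $Y_i Y_j = 0$ and covariance $-1/(2r+1)^2$ per pair; algebra shows that these negative terms cancel almost all of the diagonal, leaving only an $O(1)$ residue. For $d > 2r$ the two windows are disjoint, so $Y_i$ and $Y_j$ depend on disjoint hash values and are independent. The delicate regime is $r < d \le 2r$: here the windows overlap but neither center lies inside the other's, and one must compute $\E[Y_i Y_j]$ directly. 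A case analysis on which coordinate of $W_i \cup W_j$ carries the overall smallest hash value shows that $Y_i Y_j = 1$ can occur only when this minimum is $h[i]$ or $h[j]$, which by exchangeability yields $\E[Y_i Y_j] = 2/((2r+1)(2r+1+d))$ and hence a small positive covariance. Summing across $d$ and all pairs, this regime contributes $\Theta(L/r) = \Theta(T)$, and careful tracking of constants gives $\var[X] \le T$.

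With $\E[X] = T + O(1)$ and $\var[X] \le T$, Chebyshev's inequality yields $\Pr[\,|X - \E[X]| \ge \sqrt{cT}\,] \le 1/c$, and the $O(1)$ shift between $\E[X]$ and $T$ is absorbed in the only non-trivial range $c > 1$ (for $c \le 1$ the claimed bound is vacuous). The main obstacle I anticipate is the moderate-distance regime $r < d \le 2r$, where the ``disjoint windows implies independence'' shortcut does not apply; one needs the explicit identity $\E[Y_i Y_j] = 2/((2r+1)(2r+1+d))$ obtained via the exchangeability trick, and then enough bookkeeping to confirm that the positive covariances there do not push the total variance above~$T$.
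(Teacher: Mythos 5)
Your proposal is correct and follows essentially the same route as the paper: write $X$ as a sum of local-minimum indicator variables, compute $\E[X]=T$ from the choice of $r$, bound $\var[X]\le T$ via a three-case analysis of $\cov[Y_i,Y_j]$ according to $|i-j|$ (disjoint windows give independence, nested centers give mutual exclusion and negative covariance, overlapping windows give a small positive covariance that is offset by the negative terms), and finish with Chebyshev. The only notable difference is in the regime $r<|i-j|\le 2r$, where you compute $\E[Y_iY_j]=2/\bigl((2r+1)(2r+1+|i-j|)\bigr)$ exactly by locating the global minimum of $W_i\cup W_j$, whereas the paper derives only the upper bound $\Pr[X_j=1\mid X_i=1]\le 1/(r+1)$ by a conditioning argument; both suffice, and your bookkeeping (cancelling the close-range negative covariances against the diagonal rather than against the moderate-range positives) is equally valid.
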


\begin{proof}
Consider the array $h[1..\abs{s}-q+1]$ constructed in Algorithm~\ref{alg:oblivious}; $h[i]$ is the hash value of the $i$-th $q$-gram of $s$.  Let $w = \abs{s}-q+1 - 2r$. For $i = 1, \ldots, w$, define a random variable $X_i$ whose value is $1$ if $h[i+r]$ is the smallest coordinate in the {\em window} $h[i .. i+2r]$, and $0$ otherwise.   Let $X = \sum_{i \in [w]} X_i$, which is the total number of anchors generated by Algorithm~\ref{alg:oblivious}.  We now analyze the random variable $X$.  

We start by computing its expectation. Recall that we have set $r$ to be $\lfloor \frac{|s| - q + 1 - T}{2T + 2} \rfloor$ at Line~\ref{ln:r} of Algorithm~\ref{alg:oblivious}.  For simplicity we ignore the floor operation whose effect is negligible to the analysis.
\begin{eqnarray} 
\E[X] &=&   \sum_{i\in [w]} \E[X_i] =  \sum_{i\in [w]} \Pr[X_i = 1] = \frac{w}{2r + 1} = T. \label{eq:a-0}
\end{eqnarray} 
We next compute the variance. 
\begin{eqnarray} 
\var[X] &=&  \sum_{i \in [w]} \var[X_i] + \sum_{i \neq j} \cov[X_i, X_j] \nonumber \\
 &=& \sum_{i \in [w]} \var[X_i] +  \frac{1}{2} \sum_{i }\sum_{j \neq i } \cov[X_i, X_j]. \label{eq:a-1}
\end{eqnarray} 
We compute the two terms of (\ref{eq:a-1}) separately.  For the first term,
\begin{eqnarray}
\sum_{i \in [w]} \var[X_i] &=& \sum_{i \in [w]} \left(\E[X_i^2] - \left(\E[X_i]\right)^2\right) \nonumber \\
&=& w \times \left( \frac{1}{2r + 1} - \frac{1}{(2r + 1)^2} \right) \nonumber \\
&\le& \frac{w}{2r + 1}. \label{eq:a-2}
\end{eqnarray}
For the second term of (\ref{eq:a-1}), by the definition of the covariance,
\begin{eqnarray*}
\cov[X_i, X_j] &=& \E[X_i X_j] - \E[X_i] \E[X_j] \\
&=& \E[X_i X_j] - \frac{1}{(2r+1)^2}.
\end{eqnarray*}
We analyze $\E[X_i X_j]$ in three cases.

\paragraph{Case I} $\abs{i - j} \ge 2r + 1$.  It is easy to see that in this case $X_i$ and $X_j$ are independent, since their corresponding windows $h[i..i+2r]$ and $h[j..j+2r]$ are disjoint.  We thus have $\E[X_i X_j] = \E[X_i] \E[X_j]$, and consequently $\cov[X_i, X_j] = 0$.
\smallskip

\paragraph{Case II} $\abs{i - j} \le r$.  In this case, $h[i+r]$ is inside the window $h[j .. j+2r]$, and symmetrically $h[j+r]$ is inside the window $h[i.. i+2r]$.  Thus if $X_i = 1$ then we must have $X_j = 0$, and if $X_j = 1$ then we must have $X_i = 0$.  Therefore $\E[X_i X_j] = 0$, and consequently $\cov[X_i, X_j] = - \frac{1}{(2r+1)^2}$.

\paragraph{Case III} $r < \abs{i - j} < 2r+1$. The analysis for this case is a bit more complicated.  Consider two windows $W_i = h[i .. i+2r]$ and $W_j = h[j .. j+2r]$ which overlap. We divide their union into three areas; see Figure~\ref{fig:window} for an illustration.  Area 2 denotes the intersection of the two windows, and Area 1 and Area 3 denote the coordinates that are only in $W_i$ and $W_j$ respectively.  It is easy to see that the number of coordinates in Area 1 and Area 3 are equal; let $\alpha\ (r < \alpha < 2r+1)$ denote this number.

\begin{figure}[t]
\centering
\includegraphics[width=0.55\textwidth]{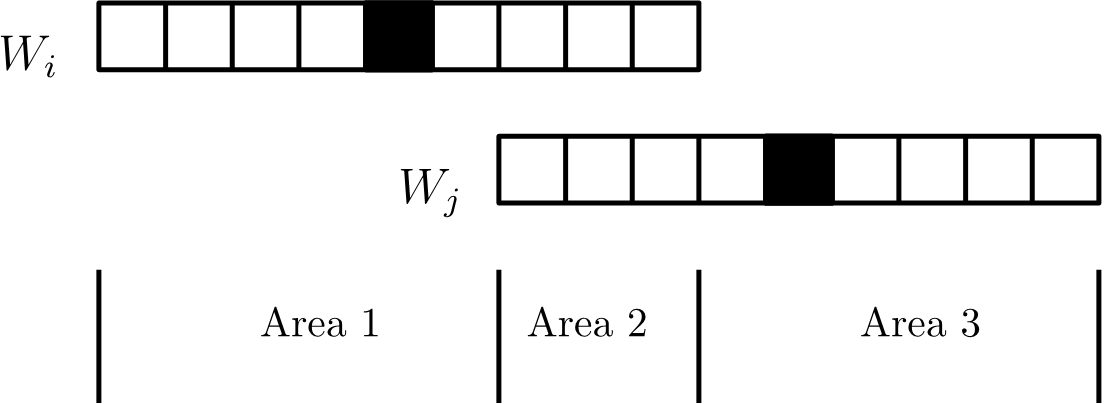}
\caption{Illustration of windows $W_i, W_j$ when $r< \abs{i - j} < 2r + 1$.  Black square represents the central coordinate of the window. The squares in same column correspond to same coordinate in the array $h[]$; we duplicate them for the illustration purpose.}
\label{fig:window}
\end{figure}
	
We write
\begin{eqnarray*}
\E[X_i X_j] &=& \Pr[X_i = 1, X_j = 1] \\
&=& \Pr[X_j = 1\ |\ X_i = 1] \cdot \Pr[X_i = 1] \\
&=& \Pr[X_j = 1\ |\ X_i = 1] \cdot \frac{1}{2r+1}.
\end{eqnarray*}
We thus only need to analyze $\Pr[X_j = 1\ |\ X_i = 1]$.  Define a random variable $Y$ such that $Y = 1$ if the central coordinate of $W_i$ (i.e., $h[i+r]$) is smaller than all coordinates in Area 3.  We have
\begin{eqnarray}
&&\Pr[X_j = 1\ |\ X_i = 1] \nonumber \\
&=& \Pr[X_j = 1\ |\ X_i = 1, Y = 1] \cdot \Pr[Y = 1\ |\ X_i = 1] + \nonumber \\
&& \Pr[X_j = 1\ |\ X_i = 1, Y = 0] \cdot \Pr[Y = 0\ |\ X_i = 1]. \label{eq:b-2}
\end{eqnarray}

Note that $(X_i = 1) \wedge (Y = 1)$ implies that the central coordinate of $W_i$ is smaller than all coordinates in $W_j$, which, however, does not give any information about the relationship between all coordinates in $W_j$.  We thus have
\begin{equation}
\Pr[X_j = 1\ |\ X_i = 1, Y = 1] = \Pr[X_j = 1] = \frac{1}{2r+1}. \label{eq:b-3}
\end{equation} 
On the other hand, $(X_i = 1) \wedge (Y = 0)$ implies that the central coordinate of $W_i$ is smaller than all coordinates in Area $2$, and is larger than some coordinate in Area $3$.  We thus know that the minimum coordinate of $W_j$ must lie in Area $3$.  Therefore $X_j = 1$ if and only if the central coordinate of $W_j$ is larger than all other coordinates in Area $3$.  We get
\begin{equation}
\Pr[X_j = 1\ |\ X_i = 1, Y = 0] = 1/\alpha. \label{eq:b-4}
\end{equation} 
Plugging in (\ref{eq:b-3}) and (\ref{eq:b-4}) to (\ref{eq:b-2}), we have
\begin{eqnarray*}
&&\Pr[X_j = 1\ |\ X_i = 1] \\
&=&  \frac{1}{2r+1} \cdot \Pr[Y = 1\ |\ X_i = 1] +  \frac{1}{\alpha} \cdot \Pr[Y = 0\ |\ X_i = 1] \\
&\le&  \frac{1}{\alpha} \le \frac{1}{r+1}.
 \label{eq:b-5}
\end{eqnarray*}
Consequently we have
\begin{eqnarray*}
\cov[X_i, X_j] \le \frac{1}{2r+1} \cdot \frac{1}{r+1} - \frac{1}{(2r+1)^2} < \frac{1}{(2r+1)^2}.
\end{eqnarray*}
Summing up, we have
\begin{equation} 
    \cov[X_i, X_j]  
    \begin{cases}
      = - \frac{1}{(2r + 1)^2}, & \abs{i - j} \le r \\
      <\frac{1}{(2r + 1)^2}, & r < \abs{i - j} < 2r + 1 \\
      = 0. &  \abs{i - j} \ge 2r + 1
    \end{cases}
    \label{eq:b-6}
\end{equation}
Plugging (\ref{eq:a-2}) and (\ref{eq:b-6}) to (\ref{eq:a-1}), we get
\begin{eqnarray}
\var[X] &<& \frac{w}{2r+1} + \frac{1}{2} \cdot w \cdot 2r \cdot \left(\frac{1}{(2r+1)^2} - \frac{1}{(2r+1)^2} \right)  \nonumber \\
&=& \frac{w}{2r+1} = T.
\label{eq:b-7}
\end{eqnarray}
By (\ref{eq:a-0}), (\ref{eq:b-7}), and the Chebyshev's inequality, we have that for any constant $c > 0$, 
$$
\Pr[\abs{X - T} \ge \sqrt{cT}] < 1/c.
$$
\end{proof}

We have empirically verified the concentration result in Lemma~\ref{lem:oblivious-partition} on two real world datasets (to be introduced in Section~\ref{sec:exp}); see Figure~\ref{fig:partition}. It is clear that the number of partitions Algorithm~\ref{alg:partition} generates are tightly concentrated around the number of target partitions $T$.

\smallskip

\begin{figure}[t]
\centering
\begin{minipage}[d]{0.33\linewidth}
\centering
\includegraphics[width=1\textwidth]{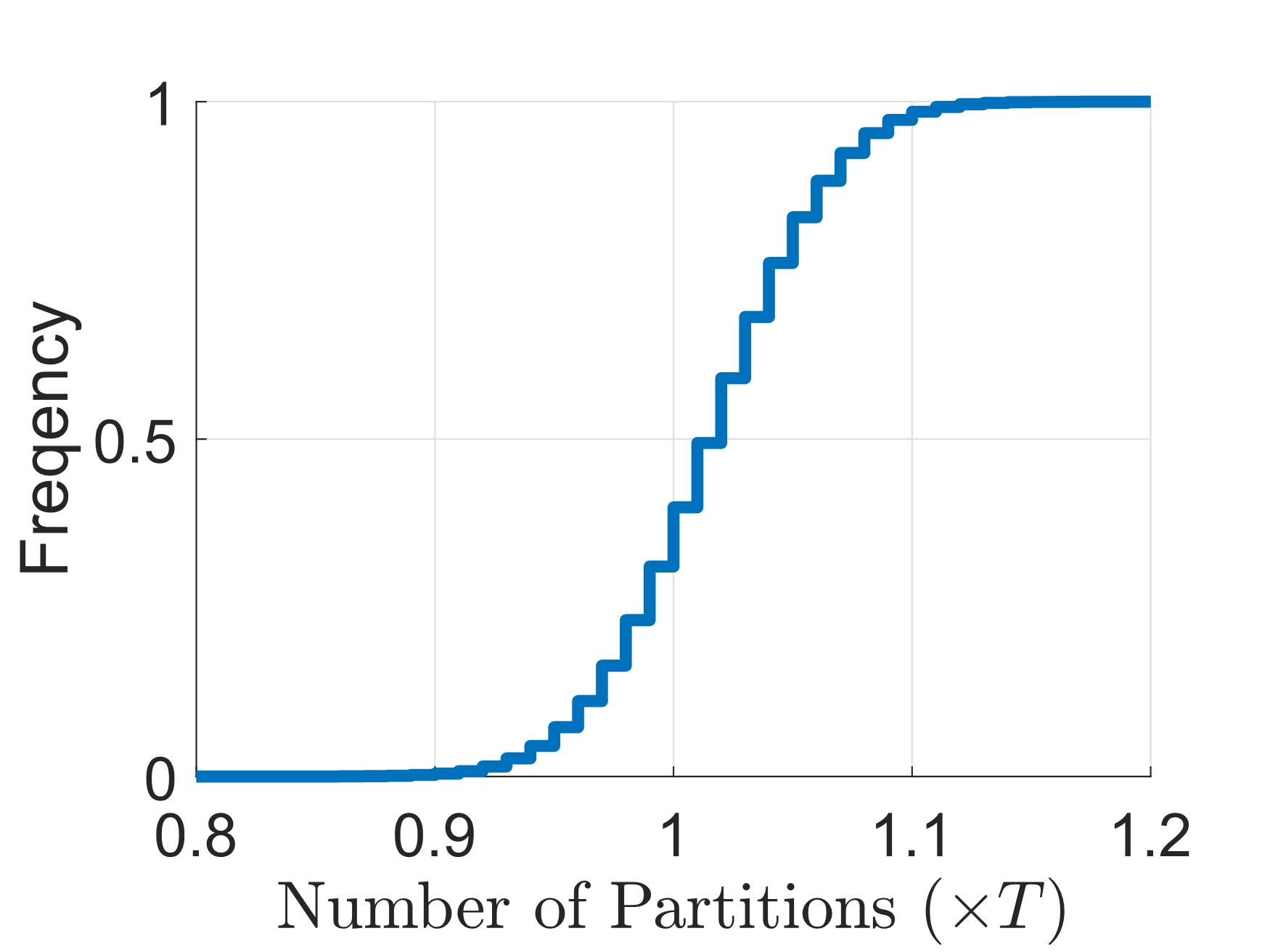}
\centerline{\genoa}
\end{minipage}
\begin{minipage}[d]{0.33\linewidth}
\centering
\includegraphics[width=1\textwidth]{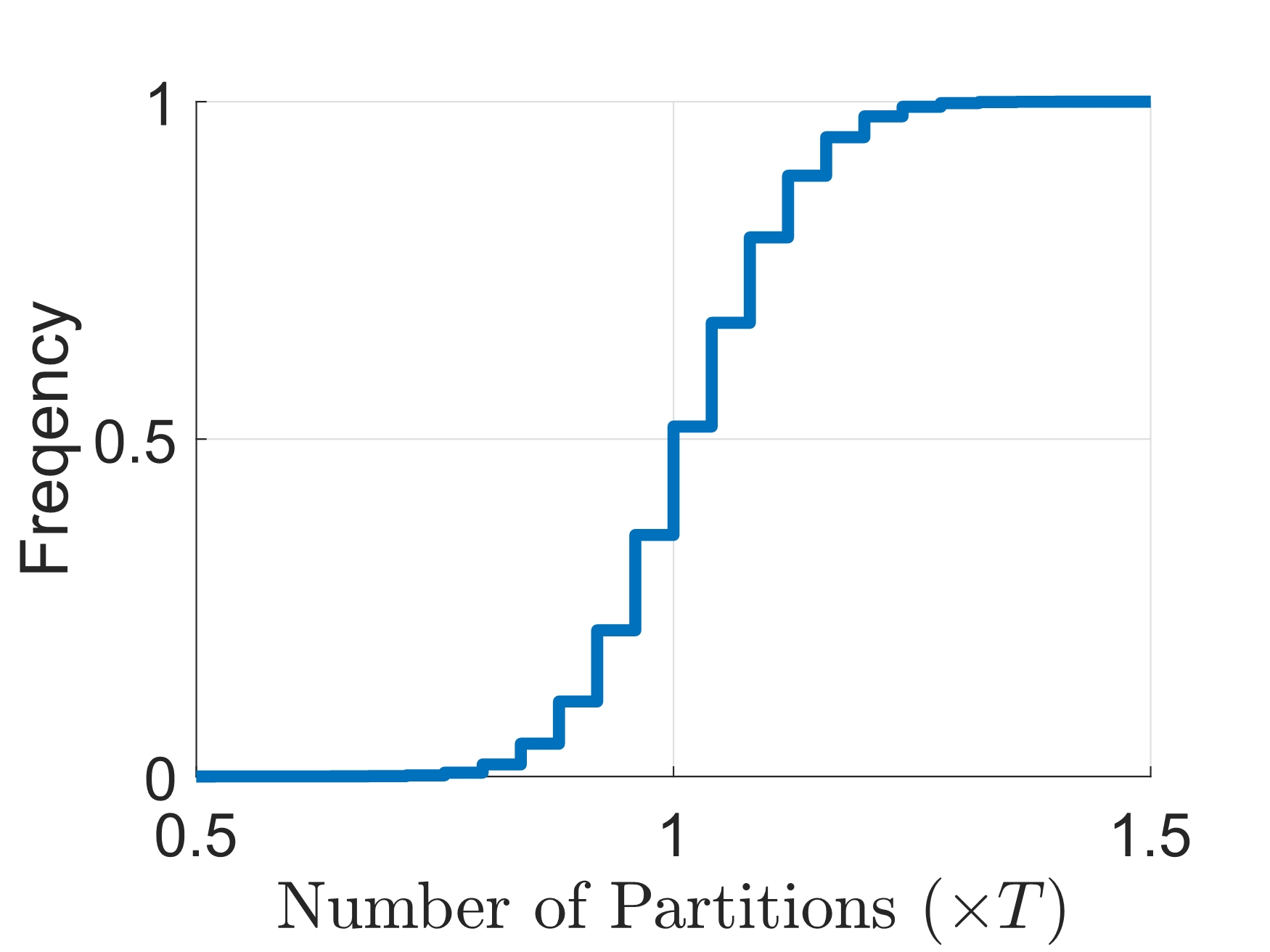}
\centerline{\uniref}
\end{minipage}
\caption{The CDFs of numbers of partitions on each string returned by Algorithm~\ref{alg:partition} on \genoa\ and \uniref\ datasets, with parameters $T = 100$ and $T = 25$ respectively.}
\label{fig:partition}
\end{figure}

We next analyze another key property of our local minimum based partition: Given two similar strings, what is the probability that they share a common partition?  We give the following lemma.  

\begin{lemma}
\label{lem:oblivious-common}
For two strings $s, t$ with $\ED(s, t) \le K$, let $\P_s$ and $\P_t$ be the partitions outputted by Algorithm~\ref{alg:partition} (setting $T = 120K$) on $s$ and $t$ respectively.  Assume $\abs{s} =  \omega(Kq)$.  The probability that $\P_s$ and $\P_t$ share a common partition is at least 0.98.
\end{lemma}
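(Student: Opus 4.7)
The plan is to combine a pigeonhole bound on common substrings with the anchor-counting machinery developed for Lemma~\ref{lem:oblivious-partition}. Splitting $s$ into $K+1$ consecutive blocks, at least one block is untouched by the $\le K$ edits of any optimal alignment, so $s$ and $t$ share a common substring $\sigma = s[\alpha..\alpha+L-1] = t[\beta..\beta+L-1]$ of length $L \ge \lfloor \abs{s}/(K+1) \rfloor$. My goal is to show that, with probability at least $0.98$, two consecutive $s$-anchors lie sufficiently deep inside $\sigma$ that they also appear as consecutive $t$-anchors, producing the desired common partition.

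Next I define the \emph{safe interior} $I = [\alpha + r,\ \alpha + L - q - r]$. For any $p \in I$ and any $p' \in [p - r, p + r]$, the $q$-gram $s[p'..p'+q-1]$ lies entirely inside $\sigma$, so $h_s[p'] = h_t[p' - \alpha + \beta]$; hence $p$ is an $s$-anchor iff $p - \alpha + \beta$ is a $t$-anchor. I treat the radii $r_s, r_t$ as equal, since $\abs{\abs{s} - \abs{t}} \le K$ while $2T + 2 = 240K + 2$, so they differ by at most one and any such off-by-one can be absorbed by shrinking $I$ by a constant. Then, if $u < v$ are the two smallest $s$-anchors in $I$, they are in fact consecutive in the full anchor list of $s$: any $s$-anchor $w$ with $u < w < v$ would lie in $[u, v] \subseteq I$ (since $I$ is an interval), contradicting minimality; and the boundary anchors $1, \abs{s}$ fall outside $(u, v)$ because $u > 1$ and $v < \abs{s}$. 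Through the bijection, $u - \alpha + \beta$ and $v - \alpha + \beta$ are consecutive $t$-anchors, and the substring $s[u..v - 1] = t[u - \alpha + \beta .. v - \alpha + \beta - 1]$ is the common partition.

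It then remains to bound $\Pr[X' < 2]$, where $X'$ counts the $s$-anchors in $I$. Writing $X'$ as the same indicator sum used in the proof of Lemma~\ref{lem:oblivious-partition} but restricted to starting indices whose central coordinate lies in $I$, the first-moment computation gives $\E[X'] = \abs{I}/(2r+1)$, and the covariance bounds in (\ref{eq:b-6}), which depend only on relative index differences, transfer verbatim to yield $\var[X'] \le \abs{I}/(2r+1) + O(1)$ (a constant correction from the cancellations that are lost near the two edges of $I$). With $T = 120K$ we have $2r + 1 \approx \abs{s}/(120K)$, and combining $L \ge \abs{s}/(K+1)$ with the hypothesis $\abs{s} = \omega(Kq)$ to absorb the $O(q)$ endpoint correction gives $\abs{I}/(2r+1) \ge 119 - o(1)$. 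Chebyshev's inequality then gives $\Pr[X' < 2] \le \var[X']/(\E[X'] - 2)^2 \lesssim 119/117^2 < 0.01$, comfortably below the $0.02$ target. The main obstacle I anticipate is not the tail bound but the index bookkeeping in paragraph two: keeping positions in $s$, $\sigma$, and $t$ aligned, handling the $r_s$-versus-$r_t$ mismatch, and showing that anchors consecutive inside $I$ are consecutive in the full anchor set of $s$. The interval structure of $I$ is what makes all three implications go through cleanly.
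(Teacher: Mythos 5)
Your overall strategy matches the paper's: locate a long common substring $\sigma$ by pigeonhole, count anchors falling well inside $\sigma$ using the first/second-moment machinery from Lemma~\ref{lem:oblivious-partition} plus Chebyshev, and argue that two consecutive interior anchors transfer to $t$ and yield a common partition. Your ``safe interior'' $I$ and the minimality argument for adjacency are a clean way of organizing what the paper does by taking four anchors $a_1,\dots,a_4$ on $\sigma$ and working with the middle two, and your expectation and variance bookkeeping for $X'$ is sound.

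There is, however, one genuine gap: the dismissal of the case $r_s \neq r_t$. Shrinking $I$ by a constant only guarantees that the $(r+1)$-neighborhood of each candidate position still lies inside $\sigma$; it does not make the anchor conditions for $s$ and $t$ coincide, because those conditions are local-minimum tests at \emph{different radii}. If $r_t = r_s + 1$, a position $p \in I$ that is the strict minimum of its $(2r_s+1)$-window can fail to be the minimum of the $(2r_t+1)$-window because of the two extra coordinates at distance $r_s+1$ --- an event of probability roughly $1/(r_s+1)$ per anchor that no deterministic shrinking of $I$ can rule out, so your claimed ``iff'' breaks in the direction you need. Symmetrically, if $r_t = r_s - 1$, your two chosen anchors do transfer, but an \emph{extra} $t$-anchor may appear strictly between their images, destroying adjacency in $t$ and hence the common partition. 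The paper handles exactly this case probabilistically: conditioned on a position being an $s$-anchor, it remains a $t$-anchor with probability at least $1 - 1/(r+1)$, and the union bound over the two chosen anchors costs $2/(r+1) = o(1)$ --- which is precisely why the lemma's guarantee is $0.98$ rather than the $0.99$ your Chebyshev step delivers. Your argument needs this additional (small, but genuinely probabilistic) correction; the rest of the proposal is consistent with the paper's proof.
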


\begin{proof}

Since $\ED(s, t) \le K$, we have $\abs{t} \in [ \abs{s}-K,  \abs{s}+K]$, and $s$ and $t$ must share a common substring of length at least $L = ( \abs{s}-K)/(K+1)$ in the optimal alignment.  

Let $\gamma$ be such a common substring.  Let $r_s = \lfloor \frac{|s| - q + 1 - T}{2T + 2} \rfloor$, and let $\eta = \frac{L-q+1-2r_s}{2r_s+1}$.  When running Algorithm~\ref{alg:oblivious} on $s$, by an almost identical argument as that for the proof of Lemma~\ref{lem:oblivious-partition}, we have that the number of anchors $X$ on $\gamma$ satisfies 
\begin{equation}
\label{eq:c-2}
\Pr[\abs{X - \eta} \ge \sqrt{c \eta}] < 1/c.
\end{equation} 
For $T = 120K$ and $\abs{s} = \omega(Kq)$, we have 
\begin{eqnarray}
\label{eq:c-2-1}
\eta &=& \frac{L-q+1-2r_s}{2r_s+1} \nonumber\\ 
&\ge& \left( \frac{\abs{s}-K}{K+1}-q+1-2r_s \right) \cdot \frac{T+1}{\abs{s}-q+2} \nonumber\\
&\ge& 115.  
\end{eqnarray}   
Plugging (\ref{eq:c-2-1}) to (\ref{eq:c-2}), we have with probability at least $(1 - 1/100) = 0.99$  that
\begin{equation}
X \ge \eta - \sqrt{100\eta} > 4,
\end{equation}
which means that with probability $0.99$ there are at least four anchors on $\gamma$.  

Let $a_1, a_2, a_3, a_4$ be four anchors on $\gamma$ when processing $s$ using Algorithm~\ref{alg:oblivious}.  
Let $r_t = \lfloor \frac{|t| - q + 1 - T}{2T + 2} \rfloor$.  Since $\ED(s,t) \le K$ and $T = 120K$, it holds that $\abs{r_t - r_s} \le 1$.  In the case that $r_t = r_s = r$, $a_2$ and $a_3$ must also be anchors when processing $t$ using Algorithm~\ref{alg:oblivious}, since an anchor is fully determined by a neighborhood of size $r$.

For the case when $\abs{r_t - r_s} = 1$, w.l.o.g., assume that $r_s = r$ and $r_t = r + 1$.  Now the probability that $a_2$ is still an anchor when processing $t$, given the fact that $a_2$ is an anchor when processing $s$, is at least $1 - 1/(r+1)$.  Same argument holds for $a_3$.  Thus with probability $0.99 - 2/(r+1) \ge 0.98$ (note that $r = r_s = \lfloor \frac{|s| - q + 1 - T}{2T + 2} \rfloor = \omega(1)$ given $\abs{s} = \omega(qK)$ and $T = 120K$), $a_2$ and $a_3$ are also anchors when processing $t$.

Finally, observe that once $s$ and $t$ share two adjacent anchors $a_2$ and $a_3$, they must share at least one common partition. 
\end{proof}

\begin{remark}[Choice of $T$]
\label{rem:T}
We note that the choice of $T\ (=120K)$ in Lemma~\ref{lem:oblivious-common} is overly ``pessimistic'' -- it is just for the convenience of analysis. Moreover, we only considered {\em one} pair of common substring of length $L \approx \abs{s}/K$, while the average length of the (at most) $K+1$ pairs of common substrings between $s$ and $t$ in the optimal alignment is at least $\frac{s-K}{K+1} \approx \abs{s}/K$.  A finer analysis which considers all pairs of common substrings in the optimal alignment can reduce the value of $T$ all the way down to a value close to $K$, while still guarantee that $\P_s$ and $\P_t$ share a common partition with a good probability.  However, the analysis is a bit 
cumbersome and we will leave it to the full version of this paper.  The main point of this remark is that in practice we can just set $T \approx K$, or even {\em smaller} since in real-world datasets multiple edits may occur in the same location, which {\em effectively} increases the average length of common substrings.  In our experiments we find that $T \in [K/5, K]$ are good choices for all the datasets we have tested.
\end{remark}

{\em Parallel repetitions for boosting the success probability.\ \ } Though the success probability in Lemma~\ref{lem:oblivious-common} is only $0.98$, and it is only for each pair of similar strings, we can easily boost it to high probability for all pairs of similar strings using parallel repetitions.  We can repeat the partition process for each string for $\log n$ times using independent randomness, and then union all the partitions of the string.  Now for each pair of similar strings, the probability that they share a common partition is at least $1 - 0.02^{\log n} \ge 1 - 1/n^5$. We then use a union bound on the at most $n^2$ pairs of similar strings, and get that the probability that all pairs of similar strings share at least one common partition is at least $1 - 1/n^3$.  We note  in our experiments that we do not need this boosting procedure since a single run of the partition process already achieves perfect accuracy. 

\begin{theorem}
\label{thm:main}
If we apply Algorithm~\ref{alg:partition} augmented by the parallel repetition discussed above on all input strings, then with probability $1 - n^{-\Omega(1)}$, all pair of strings with edit distances at most $K$ will share at least one common partition.  The expected running time of the algorithm is $\log n$ times the input size, and the space needed is also $\log n$ times the input size.
\end{theorem}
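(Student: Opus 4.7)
The plan is to decompose the theorem into three claims and handle each in turn: the correctness guarantee that all similar pairs share a common partition with probability $1 - n^{-\Omega(1)}$, the expected running-time bound, and the space bound. Lemma~\ref{lem:oblivious-common} already does the main conceptual work, so the remaining proof will consist of probability amplification and bookkeeping layered on top of what Section~\ref{sec:partition-analysis} has established.

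For correctness, I would fix an arbitrary pair $(s,t)$ of input strings with $\ED(s,t) \le K$ and argue as follows. By Lemma~\ref{lem:oblivious-common}, a single run of Algorithm~\ref{alg:partition} (with $T = \Theta(K)$) produces partition sets $\P_s, \P_t$ that share at least one common partition with probability at least $0.98$. Running $c \log n$ independent copies of Algorithm~\ref{alg:partition} with fresh hash functions $\Pi_1, \ldots, \Pi_{c \log n}$ and unioning the resulting partition sets, the probability that this fixed pair has no common partition in any copy is at most $0.02^{c \log n} \le n^{-\Omega(c)}$. A union bound over the at most $\binom{n}{2} \le n^2$ similar pairs then yields overall failure probability $n^{-\Omega(1)}$ once $c$ is large enough.

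For running time, the main task is to bound the cost of a single run of Algorithm~\ref{alg:partition} on a string $s$. Computing the array $h[\,]$ costs $O(|s|)$ assuming $O(1)$-time evaluation of $\Pi$ on each $q$-gram. The early-terminating inner loop of Algorithm~\ref{alg:oblivious} at position $i$ has expected length $\sum_{k=1}^{2r+1} 1/k = O(\log r)$, because under the distinct-$q$-gram assumption of Section~\ref{sec:partition-analysis} the probability that $h[i]$ remains the minimum after $k-1$ tests is exactly $1/k$. Summing over positions and strings, and absorbing this mild logarithmic factor either into the stated bound or into a standard sliding-window-minimum deque that performs the same task in deterministic $O(|s|)$ time, the expected cost of one run is linear in the total input size. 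Multiplying by the $O(\log n)$ parallel repetitions yields the advertised $\log n$ times input size bound.

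The space bound is the easiest piece: each run stores a hash array of length $|s| - q + 1$ and, by Lemma~\ref{lem:oblivious-partition}, an expected $O(T) = O(K)$ partitions per string, both dominated by the input size; keeping $O(\log n)$ independent copies multiplies the total space by $\log n$. The main obstacle I anticipate is the inner-loop running-time analysis; the cleanest way to discharge it is to swap the naive scan for a deque-based sliding-window-minimum subroutine, which eliminates any dependence on the hash-value distribution and makes the per-run cost deterministically linear, after which the theorem follows by assembling the three pieces above.
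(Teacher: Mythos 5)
Your decomposition and the correctness and space arguments match the paper's proof essentially verbatim: Lemma~\ref{lem:oblivious-common} plus $O(\log n)$ independent repetitions, a union bound over at most $n^2$ similar pairs, and a per-string, per-repetition space bound of $O(\abs{s})$. The one place you genuinely diverge is the running-time analysis of the inner loop of Algorithm~\ref{alg:oblivious}, and there your version is actually the more careful one. The paper asserts that the early-terminating inner scan at each position runs in $O(1)$ expected time because the entries of $h[\,]$ are random; but as you compute, the loop survives its first $k-1$ comparisons exactly when $h[i]$ is the minimum of $k$ i.i.d.\ values, so the expected number of comparisons is the harmonic sum $\sum_{k=1}^{2r} 1/k = \Theta(\log r)$, not $O(1)$. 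A naive summation therefore gives $O(\abs{s}\log r)$ expected time per run, not the claimed $O(\abs{s})$. Your proposed remedy --- replacing the scan with a monotonic-deque sliding-window-minimum pass --- is standard, restores a deterministic $O(\abs{s})$ bound per run (and hence the stated $O(nN\log n)$ total after repetitions), and removes any reliance on the randomness of the hash values for the time bound. The only caveat is that the deque computes, for each window, whether its center attains the strict minimum, which is exactly what is needed here; you should state that one extra pass (or a min-count per window) handles the strictness requirement. In short: same proof as the paper for correctness and space, and a corrected, slightly stronger argument for the time bound.
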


\begin{proof}
The correctness follows directly from Lemma~\ref{lem:oblivious-common} and the discussion of parallel repetition above.  In the rest of the proof we focus on the time and space.  In fact, to show the claimed time and space usage we can just show that the time and space for partitioning one string $s$ (by Algorithm~\ref{alg:partition}) is linear in terms of the string length $\abs{s}$.

The running time of Algorithm~\ref{alg:partition} is dominated by that of its subroutine Algorithm~\ref{alg:oblivious}.  
The hash values of all $q$-grams of $s$ can be computed by the Rabin-Karp algorithm (the rolling hash) in $O(\abs{s})$ time.   For Line~\ref{ln:check-1}-\ref{ln:check-2} of Algorithm~\ref{alg:oblivious}, since each number in $h[]$ is a random hash value, the inner for-loop (Line~\ref{ln:inner-1}-\ref{ln:inner-2}) runs in $O(1)$ time in expectation.  Therefore the total running time of Algorithm~\ref{alg:partition} is $O(\abs{s})$ in expectation. 

Clearly, the space usage of Algorithm~\ref{alg:partition} is also $O(\abs{s})$.
\end{proof}

\section{The MinJoin Algorithm}
\label{sec:mjoin}

We now present our main algorithm \mjoin, depicted in Algorithm~\ref{alg:mjoin}.  We briefly explain it in words below.

\begin{algorithm}[t]
\caption{MinJoin ($\mathcal{S}, K, T$)}
\label{alg:mjoin}
\begin{algorithmic}[1]
\Require Set of input strings $\mathcal{S} = \{s_1, \ldots, s_n\}$, distance threshold $K$, number of targeted partitions $T$ 
\Ensure  $\O \gets \{(s_i, s_j)\ |\ s_i, s_j \in \S; i \neq j; \text{ED}(s_i, s_j) \le K\}$ 

\State  $\O \leftarrow \emptyset$, $\mathcal{C} \leftarrow \emptyset $   \Comment $\mathcal{C}:$ collection of candidate pairs \label{ln:min-1}
\State Pick a hash function $f : \Sigma^* \to \mathbb{N}$ and initialize an empty hash table $\D$ \label{ln:hash}
\State Generate a random hash function $\Pi: \Sigma^q \rightarrow (0,1) $
\State Sort strings in $\mathcal{S}$ first by string length increasingly, and second by the alphabetical order \label{ln:min-2}
\ForEach{$s_i \in \mathcal{S} \text{ (in the sorted order)}$} \label{ln:min-3}
\State $\P \leftarrow$ Partition-String($s_i, T, \Pi$) \label{ln:min-par}
	\ForEach{$(pos, len) \in \P$}
			\ForEach{$(j, pos_j, len_j)$ in the $f((s_i)_{pos .. pos + len - 1})$-th bucket of $\D$}  \Comment{$f(\cdot)$ is the hash function picked at Line~\ref{ln:hash}}
				\If{$\abs{\abs{s_i} - \abs{s_j}} \le K$} \label{ln:min-7}
					\If{$\abs{pos - pos_j} + \abs{(\abs{s_i} - pos) - (\abs{s_j} - pos_j)} \le K$} \label{ln:min-9}
						\State $\mathcal{C}  \leftarrow \mathcal{C}  \cup (s_i, s_j)$ 
					\EndIf
				\Else
					\State Remove $(j, pos_j, len_j)$ from $\D$ \label{ln:min-8}
				\EndIf
			\EndFor
			\State Store $(i, pos, len)$ in the $f((s_i)_{pos .. pos + len - 1})$-th bucket of $\D$
	\EndFor
\EndFor
\State Remove duplicate pairs in $\mathcal{C}$   \label{ln:min-4}

\ForEach {$(x, y) \in \mathcal{C}$} \label{ln:min-5}
	\If{$\text{ED}(x, y) \le K$}
	\State $\mathcal{O} \leftarrow  \mathcal{O} \cup (x, y)$
	\EndIf
\EndFor \label{ln:min-6}
\end{algorithmic}
\end{algorithm}

The \mjoin\ algorithm has three stages: initialization (Line~\ref{ln:min-1} - \ref{ln:min-2}), join and filtering (Line~\ref{ln:min-3} - \ref{ln:min-4}) and verification (Line~\ref{ln:min-5} - \ref{ln:min-6}).   In the first stage, we initialize an empty set $\mathcal{C}$ for candidate pairs and an empty hash table $\D$, generate a random hash function $\Pi$, and sort all strings according to their lengths for the pruning.  

In the join and filtering stage, we compute the partitions for each input string using Algorithm~\ref{alg:partition}. For each partition  $(pos, len)$, which refers the substring of $s_i$ with length $len$ and $pos$ is the index of its first character on $s_i$, we find all tuples $(j, pos_j, len_j)$ in $f((s_i)_{pos .. pos + len - 1})$-th bucket of hash table $\D$ (that is, we perform a hash join). We use two rules to prune the candidate pairs we have found.  The first condition (Line~\ref{ln:min-7}) says that if the lengths of $s_i$ and $s_j$ differ by larger than $K$, then it is impossible to have $\ED(s_i, s_j) \le K$. Consequently it is impossible to have $\ED(s_j, s_{i'}) \le K$ for any $i' > i$.  

The second condition (Line~\ref{ln:min-9}) concerns the following scenario: if $s_i$ and $s_j$ match at indices $pos$ and $pos_j$, which divides both strings into two substrings $\nu_1 = (s_i)_{1 .. pos - 1}, \nu_2 = (s_i)_{pos..\abs{s_i}}$, and $\mu_ 1 =(s_j)_{1..pos_j - 1}, \mu_ 2 = (s_j)_{pos_j..\abs{s_j}}$.  If $pos$ and $pos_j$ are indeed matched in the optimal alignment, then we must have 
$
\ED(\nu_1, \mu_1) + \ED(\nu_2, \mu_2) \le K,
$
in which case we have
$\abs{(\abs{s_i} - pos) - (\abs{s_j} - pos_j)} +  \abs{pos - pos_j} \le K.$

We add all pairs of strings that pass the two filtering conditions to the candidate set $\mathcal{C}$, and then perform a deduplication step at the end since each pair can potentially be added into $\mathcal{C}$ multiple times. 

In the verification stage, we verify whether each pair of strings in $\mathcal{C}$ indeed have edit distance at most $K$, using the standard dynamic programming algorithm by Ukkonen~\cite{Ukkonen85}.  Due to this verification step our algorithm will never output any false positive.  On the other hand, by Theorem~\ref{thm:main}, if we augment the string partition scheme with parallel repetition, then \mjoin\ will not produce any false negative with probability $1 - 1/n^{\Omega(1)}$.  Therefore \mjoin\ will achieve perfect accuracy with probability $1 - 1/n^{\Omega(1)}$.

\paragraph{Time and Space Analysis}  Let $N$ be the maximum string length in the set of input strings $\S$, and $n = \abs{\S}$. By Theorem~\ref{thm:main} the running time of the partition (without the parallel repetition) is bounded by $O(n N)$.

The total number of pairs that are fed into the filtering steps (Line~\ref{ln:min-7}, \ref{ln:min-9}) inherently depends on the concrete dataset.  Suppose partitions of all strings are evenly distributed into $\abs{\D}$ buckets of the hash table $\D$ (this is indeed what we have observed in our experiments), then we can upper bound this number by $O\left(\frac{nK}{\abs{\D}}\right)^2$ with probability $0.99$. To see this, by the proof in Lemma~\ref{lem:oblivious-partition} we know that the expected number of partitions of each string is $T = \Theta(K)$. By linearity of expectation, the expected number of partitions of all $n$ strings is $n T$. Therefore the total number of actual partitions is bounded by $O(nK)$ with probability $0.99$ by a Markov inequality.
The verification step can be done in $O(\abs{\mathcal{C}} N K)$ where $\mathcal{C}$ is the set of the candidate pairs.  

The space usage is clearly bounded by $O(n N)$, that is, the size of the input.

\begin{theorem}
\label{thm:mjoin}

The \mjoin\ algorithm has the following theoretical properties.  Consider the case that we augment the string partition procedure at Line~\ref{ln:min-par} with $\log n$ parallel repetitions. 
\begin{itemize}
\item  It achieves 100\% accuracy with probability $1 - 1/n^{\Omega(1)}$. 

\item Assuming that the partitions of all strings are evenly distributed into the buckets of the hash table, the running time of \mjoin\ is bounded by
$$O\left(n N \log n + \left(\frac{nK}{\abs{\D}}\right)^2 +\abs{\mathcal{C}} N K \right)
$$
with probability $0.99$, where $\mathcal{C}$ is the set of the candidate pairs \mjoin\ produces before the verification step.

\item The space usage of \mjoin\ is $\log n$ times the size of input.
\end{itemize}
\end{theorem}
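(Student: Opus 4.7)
My plan is to verify the three bullets of Theorem~\ref{thm:mjoin} in order, leaning on the machinery already developed. The accuracy claim is essentially an immediate consequence of Theorem~\ref{thm:main} plus the fact that the verification loop at Lines~\ref{ln:min-5}--\ref{ln:min-6} filters out any spurious candidate. Concretely, I would argue that with $\log n$ parallel repetitions of Partition-String invoked at Line~\ref{ln:min-par}, every pair $(s_i,s_j)$ with $\ED(s_i,s_j)\le K$ shares at least one common partition with probability $1-1/n^{\Omega(1)}$ by Theorem~\ref{thm:main}; each such pair is consequently inserted into $\mathcal{C}$ via the hash table $\D$ (possibly multiple times, but deduplicated at Line~\ref{ln:min-4}). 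Since the verification loop computes $\ED(x,y)$ exactly via Ukkonen's algorithm, no false positive survives, and conditioning on the high-probability event the output $\O$ is precisely the set of similar pairs.

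For the running-time bound, I would split the analysis into the three natural stages of Algorithm~\ref{alg:mjoin}. The partition stage invokes Algorithm~\ref{alg:partition} on each input string with $\log n$ independent repetitions, costing $O(nN\log n)$ in expectation by Theorem~\ref{thm:main}. For the join-and-filtering stage, I would start from Lemma~\ref{lem:oblivious-partition} which, with $T=\Theta(K)$, yields $\Theta(K)$ partitions per string in expectation; by linearity and a Markov bound, the total number of partitions over all $n$ strings (absorbing the $\log n$ factor) is $O(nK)$ with probability at least $0.99$. Under the even-distribution hypothesis, each of the $\abs{\D}$ buckets holds $O(nK/\abs{\D})$ tuples, so the nested inspection loop together with the constant-time filters at Lines~\ref{ln:min-7} and \ref{ln:min-9} contributes $O((nK/\abs{\D})^2)$ summed over buckets. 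Finally, the verification stage pays $O(NK)$ per candidate pair via Ukkonen, totalling $O(\abs{\mathcal{C}}NK)$. Adding the three contributions gives the advertised expression.

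For the space bound, I would observe that one repetition stores only a constant amount of information per partition (a triple $(i,pos,len)$), so the tuples inserted into $\D$ account for $O(nN)$ space per repetition and $O(nN\log n)$ over all $\log n$ repetitions; the deduplicated set $\mathcal{C}$ and the $O(|s|)$ working memory of Algorithm~\ref{alg:partition} are absorbed into the same bound. Thus the overall space usage is $\log n$ times the input size, as claimed.

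The step I expect to be the main obstacle is the bucket-pair bound inside the filtering stage: turning the expected total of $\Theta(nK)$ partitions into a worst-case-per-bucket count of $O(nK/\abs{\D})$ requires the even-distribution hypothesis to be used carefully, and one then has to square and sum over buckets without losing a spurious $\abs{\D}$ factor. Everything else reduces to routine accounting that chains together Theorem~\ref{thm:main}, Lemma~\ref{lem:oblivious-partition}, and the correctness of Ukkonen's verification.
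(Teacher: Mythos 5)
Your proposal is correct and follows essentially the same route as the paper, which proves this theorem only via the informal ``Time and Space Analysis'' discussion preceding it: accuracy from Theorem~\ref{thm:main} plus exact verification, the join cost via Lemma~\ref{lem:oblivious-partition}, linearity of expectation, and a Markov bound under the even-distribution assumption, and the verification cost as $O(\abs{\mathcal{C}}NK)$. The one obstacle you flag --- that summing the squared bucket sizes over all $\abs{\D}$ buckets naively gives $(nK)^2/\abs{\D}$ rather than $(nK/\abs{\D})^2$ --- is a real looseness, but it is equally present in the paper's own argument, which simply asserts the stated bound without addressing it.
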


\section{Experiments}
\label{sec:exp}

In this section we present our experimental studies. We start by describing the datasets and algorithms used in our experiments.  We then provide a detailed study of the performance of \mjoin.  Finally, we compare \mjoin\ with the state-of-the-art algorithms for edit similarity joins.

\subsection{Setup of Experiments}
\label{sec:setup}

We implemented our algorithms in C++ and performed experiments
on a Dell PowerEdge T630 server with 2 Intel Xeon E5-2667 v4 3.2GHz CPU with 8 cores each, and 256GB memory.

\paragraph{Datasets}
We use the datasets in \cite{ZZ17} which are publicly available.\footnote{See the documentation from the project website of \cite{ZZ17}: \url{https://github.com/kedayuge/Embedjoin}}  Table~\ref{tab:stat} describes the statistics of tested datasets.  

\begin{description}
\item[\uniref:] A dataset consists of UniRef90 protein sequence data obtained from UniProt Project.\footnote{\url{http://www.uniprot.org/}} The sequences whose lengths are smaller than 200 are removed, and the first 400,000 protein sequences are extracted.

\item[\trec:] A dataset consists of titles and abstracts from 270 medical journals.  The title, author, and abstract fields are extracted and concatenated.  Punctuation marks are converted into white space and all letters are in uppercase.

\item[{\tt GEN-X-Y}'s:] Datasets contain 50 human genomes obtained from the Personal Genomes Project,\footnote{\url{https://www.personalgenomes.org/us}} where {\tt X} denotes the number of strings (range from 20k to 320k), and {\tt Y} denotes the string length (S $\approx$ 5k, M $\approx$ 10k, L $\approx$ 20k).  Each string is a substring randomly sampled from the Chromosome 20 of human genome.
\end{description}

\begin{table}[t]
\centering 
\scalebox{1}{
\begin{tabular}{lccccc} 
\hline
Datasets &$n$ &Avg Len &Min Len & Max Len & $|\Sigma|$\\  \hline 
\uniref &400000  &445 &200  &35213 &25\\  \hline
\trec &233435  &1217 &80   &3947 &37    \\  \hline
\genoa &50000   &5000  &4829 & 5152 &4   \\  \hline
\genob &20000   &5000  &4829 &5109 &4   \\  \hline
\genoc &20000   &10000  &9843 &10154 &4   \\  \hline
\genod &20000   &20000  &19821 &20109 &4   \\  \hline
\genoe &80000   &5000  &4814  &5109 &4   \\  \hline
\genof &320000   &5000  &4811  &5154 &4   \\  \hline
\end{tabular}
}
\caption{Statistics of tested datasets (from \protect \cite{ZZ17})}
\label{tab:stat}
\end{table}

\paragraph{Algorithms}
We compare \mjoin\ with the state-of-the-art algorithms for edit similarity joins discussed in the  introduction, including \pass \cite{LDW11}, \qchunk \cite{QWL11}, \vchunk  \cite{WQX13}, \ebdjoin \cite{ZZ17}. All codes are downloaded from the corresponding project websites.

\paragraph{Measurements and Choices of Parameters}
We use three metrics to measure the performance of tested algorithms: time, space, and accuracy.  

We note that except \mjoin\ and \ebdjoin\ which are randomized and may have false negatives, all other tested algorithms are deterministic and output the exact number of similar pairs, and thus their accuracy is always 100\%.   According to our theoretical analysis (Theorem~\ref{thm:main} and Remark~\ref{rem:T}), by setting $T$ appropriately and using $\log n$ repetitions of the string partition procedure (Algorithm~\ref{alg:partition}), \mjoin\ can output all similar pairs with a high probability.  In practice, we found that a single execution of Algorithm~\ref{alg:partition} with $T \in [K/5, K]$ can already achieve 100\% accuracy on all tested datasets.\footnote{Whenever there is an exact algorithm that finishes in a reasonable amount of time so that we get to know the ground truth.} In fact, as we shall see in Figure~\ref{fig:tacc} and Figure~\ref{fig:ttime}, varying $T$ in this range will not change the accuracy by much, but it does slightly affect the running time since larger $T$ will introduce more false positives for verification.

In the rest of this section we will always write the accuracy for \ebdjoin\ on the plots, and omit that for \mjoin\ if it is 100\%.

We always choose the {\em best} parameters of other tested algorithms. \qchunk\ has two parameters: $q$ (the size of $q$-gram) and indexing method. We found that the {\em indexchunk} always performs better than {\em indexgram} on all datasets, and we always choose the best $q$ for each experiment. \vchunk\ has a parameter {\em scale} to tune. \pass\ has no parameter. \ebdjoin\ has three parameters $m, r, z$. We choose the parameters based on the recommendation of \cite{ZZ17}: We select the best combinations of parameters to achieve at least $95\%$ accuracy on \uniref\ and \trec\ datasets, and at least $99\%$ accuracy on \genoa\ dataset; and we select $r = z= 7, m = 15 - \lfloor \log_2 x \rfloor$ on the rest of datasets, where $x\%$ is the edit threshold. 

Each result is an average of $5$ independent runs.  For \mjoin\ we fix the randomness at the beginning so that all runs return the same result on the same dataset.

\begin{figure}[t]
\centering
\begin{minipage}[d]{0.33\linewidth}
\centering
\includegraphics[width=1\textwidth]{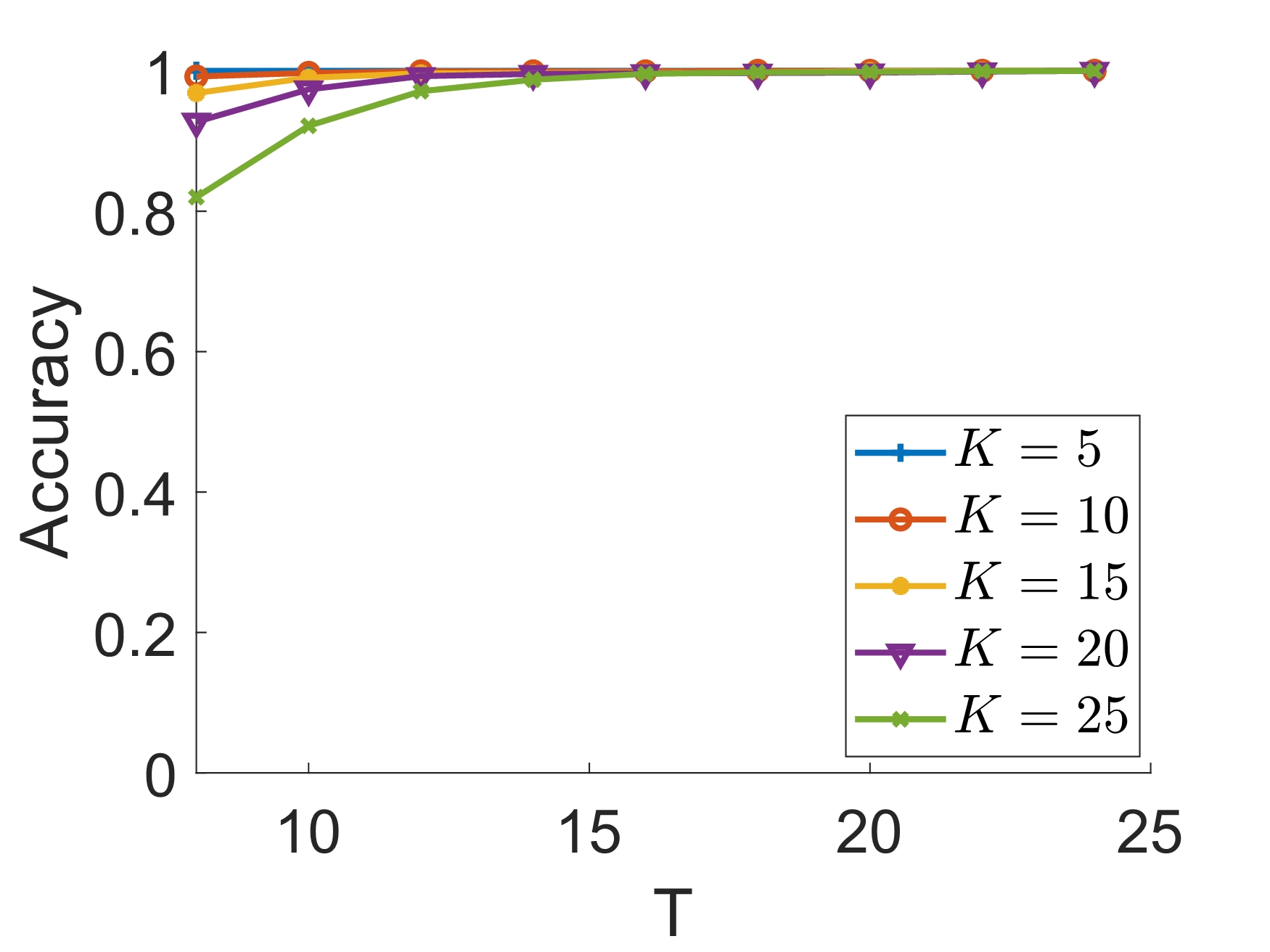}
\centerline{\uniref }
\end{minipage}
\begin{minipage}[d]{0.33\linewidth}
\centering
\includegraphics[width=1\textwidth]{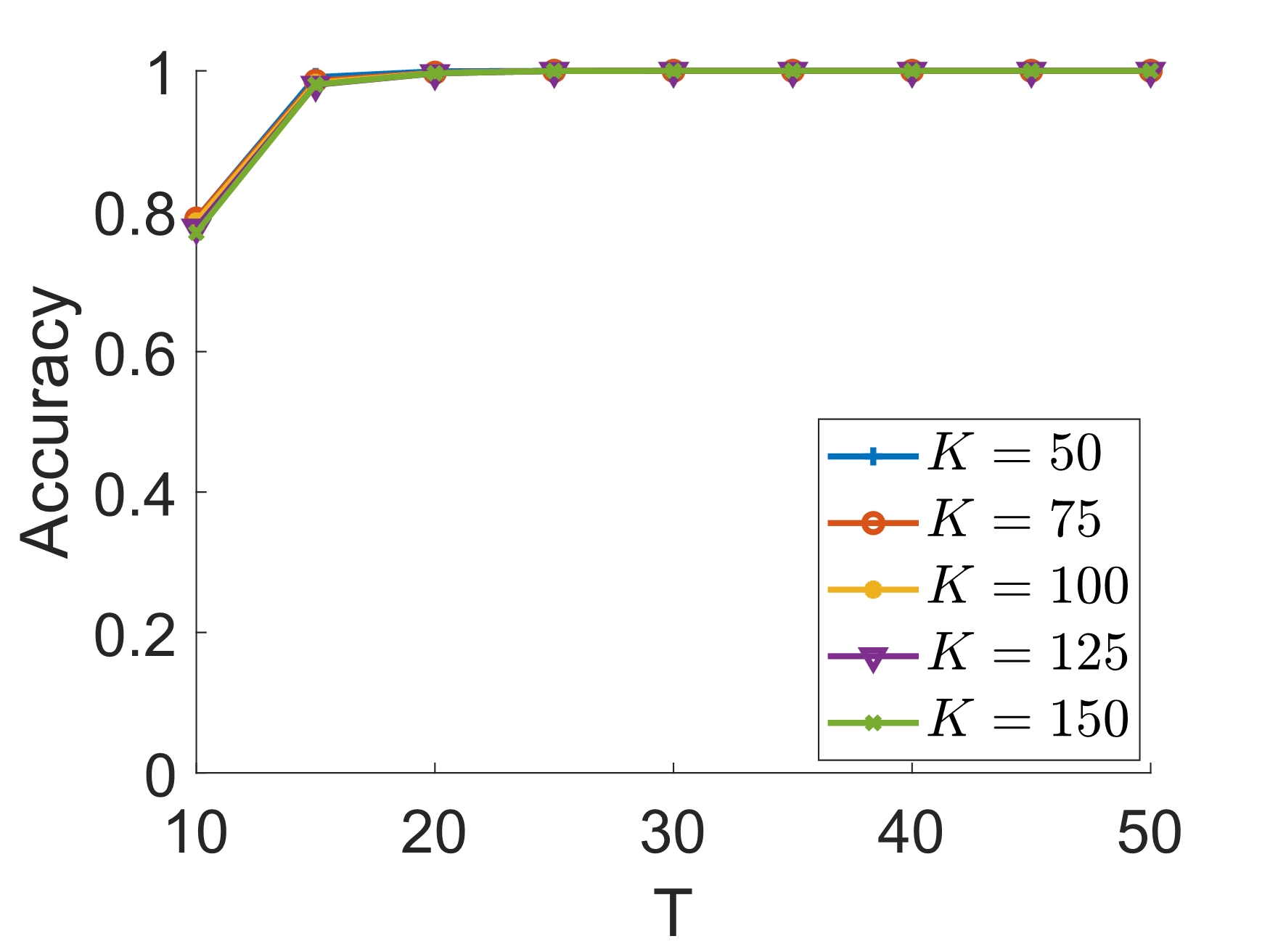}
\centerline{\genoa}
\end{minipage}
\caption{Influence of $T$ on accuracy}
\label{fig:tacc}
\end{figure}

\begin{figure}[t]
\centering
\begin{minipage}[d]{0.33\linewidth}
\centering
\includegraphics[width=1\textwidth]{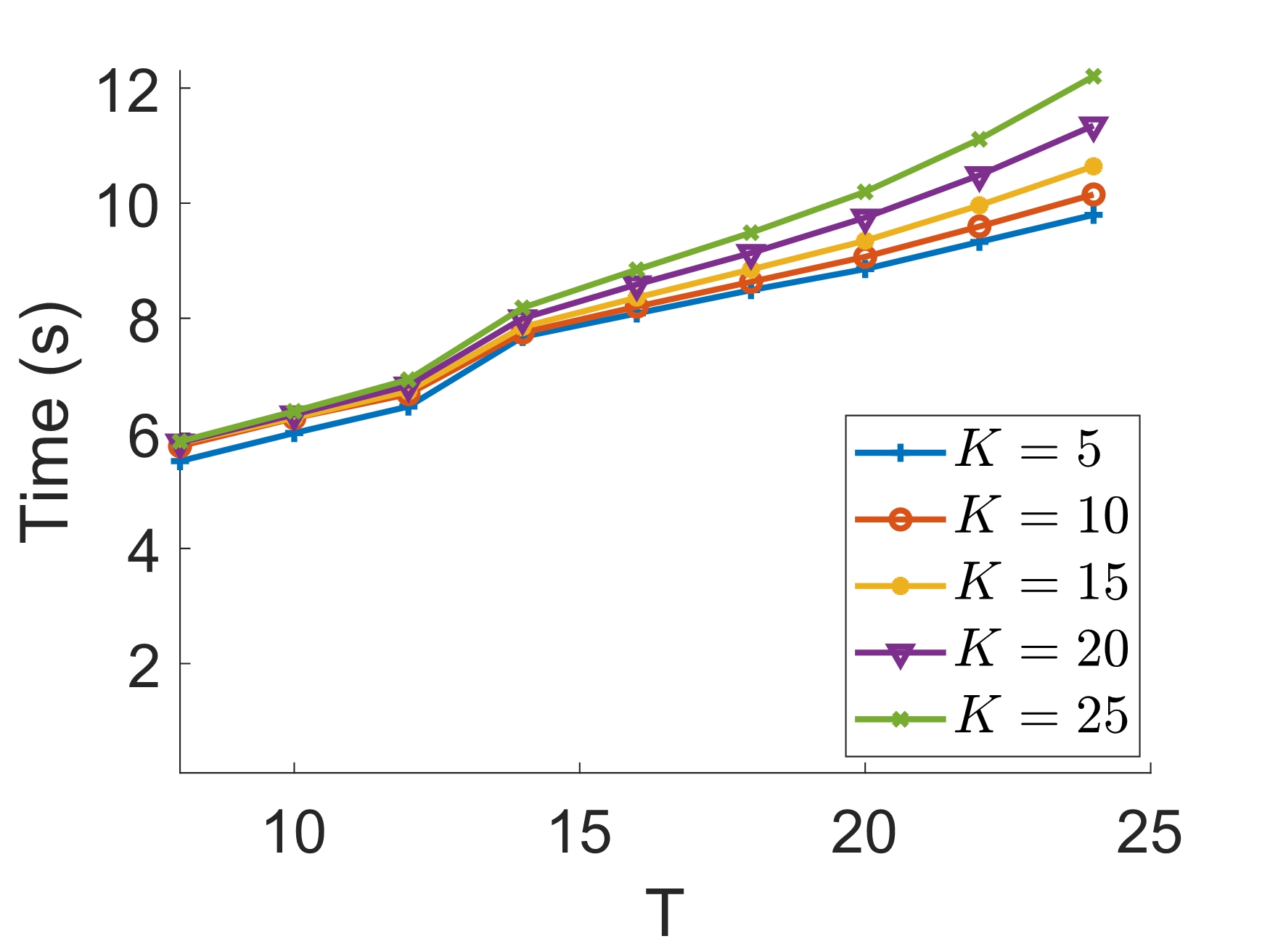}
\centerline{\uniref }
\end{minipage}
\begin{minipage}[d]{0.33\linewidth}
\centering
\includegraphics[width=1\textwidth]{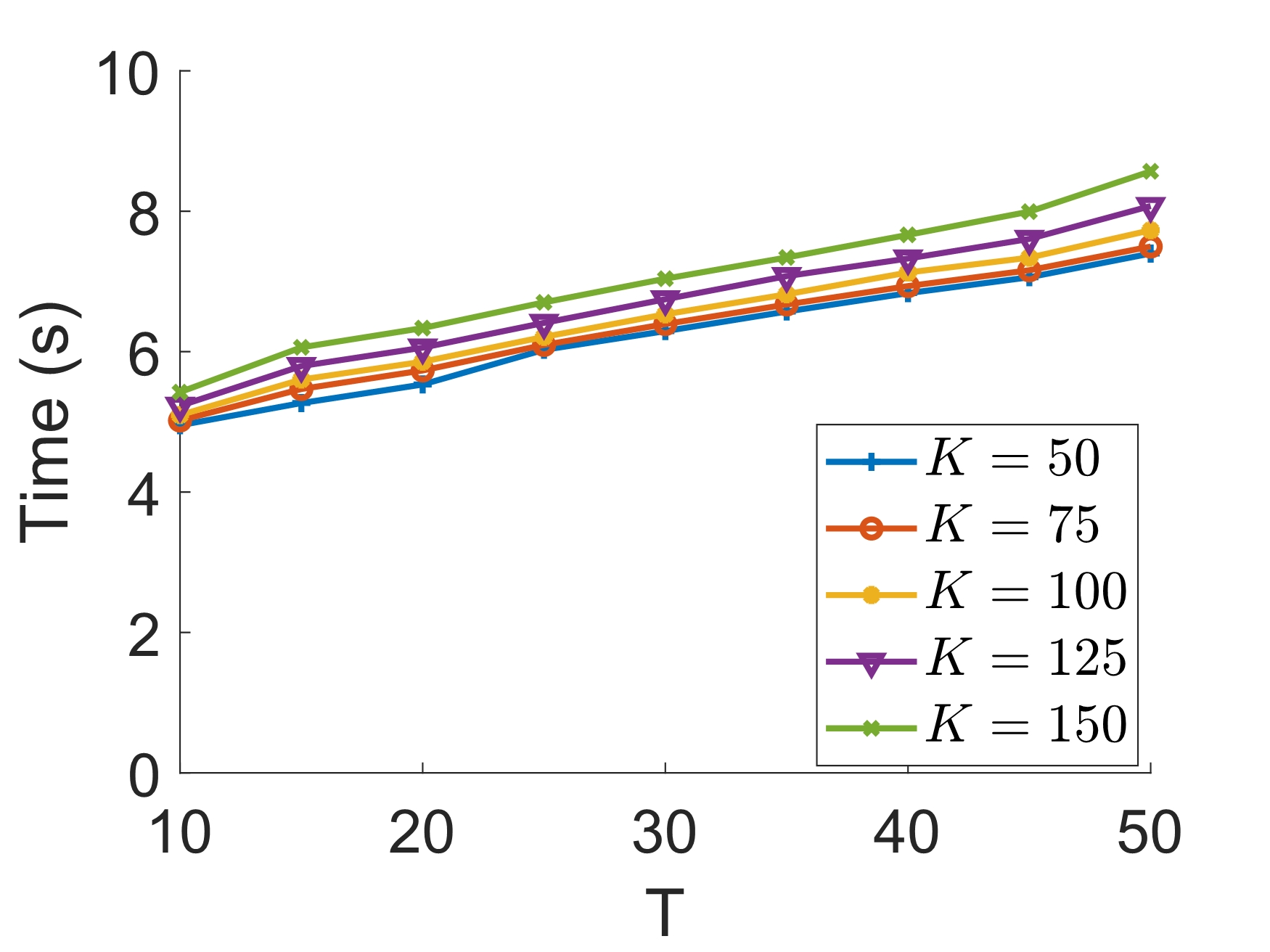}
\centerline{\genoa}
\end{minipage}
\caption{Influence of $T$ on running time}
\label{fig:ttime}
\end{figure}

\begin{figure}[t]
\centering
\begin{minipage}[d]{0.33\linewidth}
\centering
\includegraphics[width=1\textwidth]{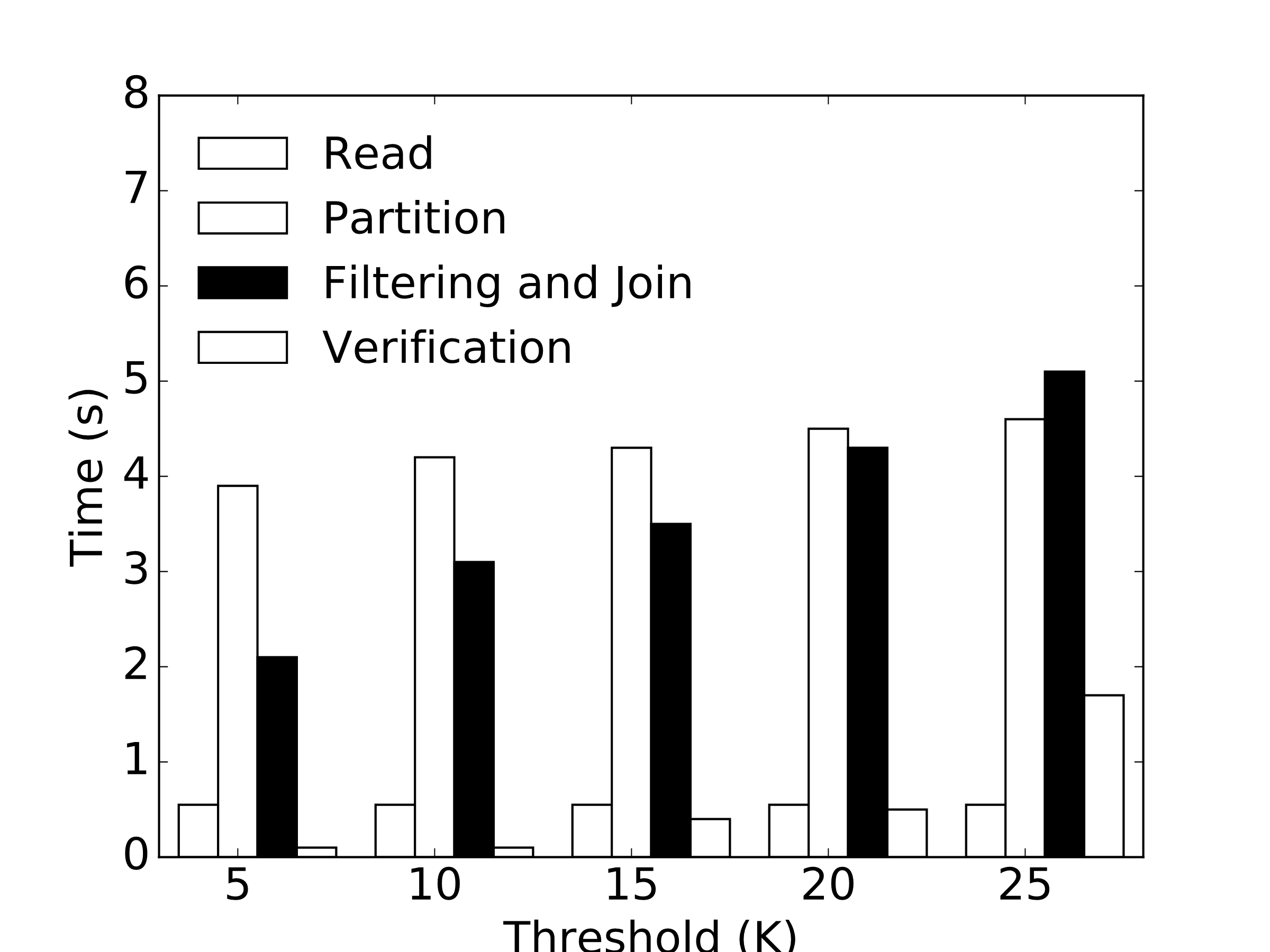}
\centerline{\uniref}
\end{minipage}
\begin{minipage}[d]{0.33\linewidth}
\centering
\includegraphics[width=1\textwidth]{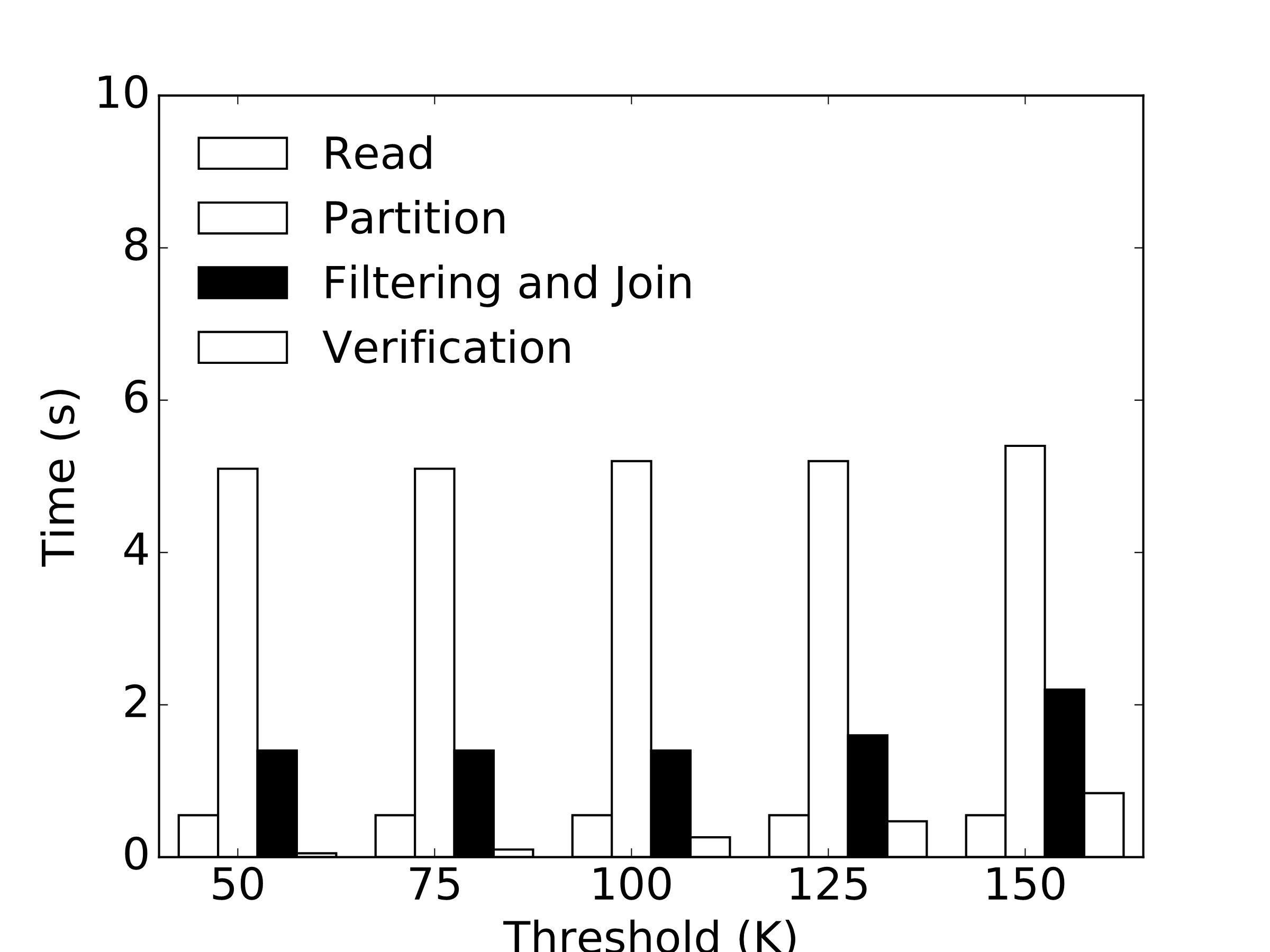}
\centerline{\genoa}
\end{minipage}
\caption{Running time of different parts of \mjoin, varying $K$.}
\label{fig:kbar}
\end{figure}

\subsection{Experiments for \mjoin}
\label{sec:exp-mjoin}

We first show the performance of \mjoin.  We will start by investigating the influence of parameter $T$ on running time and accuracy, and then present the running time of different stages of \mjoin.

\paragraph{Influence of Parameter $T$}  
We study empirically how parameter $T$ influences the accuracy and the running time of \mjoin.  We present the influence of $T$ on the accuracy and running time in Figure~\ref{fig:tacc} and \ref{fig:ttime} respectively.  As predicted by theory, both time and accuracy increase when $T$ increase. We also tested different edit thresholds $K$.  We observe that when $K$ is larger, we need a larger $T$ to maintain the 100\% accuracy, which is also consistent with the theory where we need to pick $T = \Theta(K)$.  As mentioned in Section~\ref{sec:setup}, we found that setting $T$ in the range $[K/5, K]$ is good for all the tested datasets.

\paragraph{Running Time of Different Parts of \mjoin}  
We have also measured the running time of different parts of \mjoin, including input read, string partition, hash join and filtering, and verification. We present in Figure~\ref{fig:kbar} the running time of \mjoin\ on (1) reading the input strings, (2) partitioning strings, (3) performing the hash join and filtering, and (4) verification varying the edit threshold $K$. Certainly, the input read time will not change for different $K$. We observe that the time for join and filtering increases slightly when $K$ increases, that for partition is stable, and that for verification increases considerably when $K$ increases. On \uniref\ dataset, the string partitioning as well as join and filtering steps are bottleneck, and on \genoa\ dataset, the string partition step is bottleneck. The verification step takes the smallest amount of time in most cases. 

\subsection{A Comparison with MinHash}
\label{sec:exp-minhash}

\begin{figure}[t]
\centering
\begin{minipage}[d]{0.33\linewidth}
\centering
\includegraphics[width=1\textwidth]{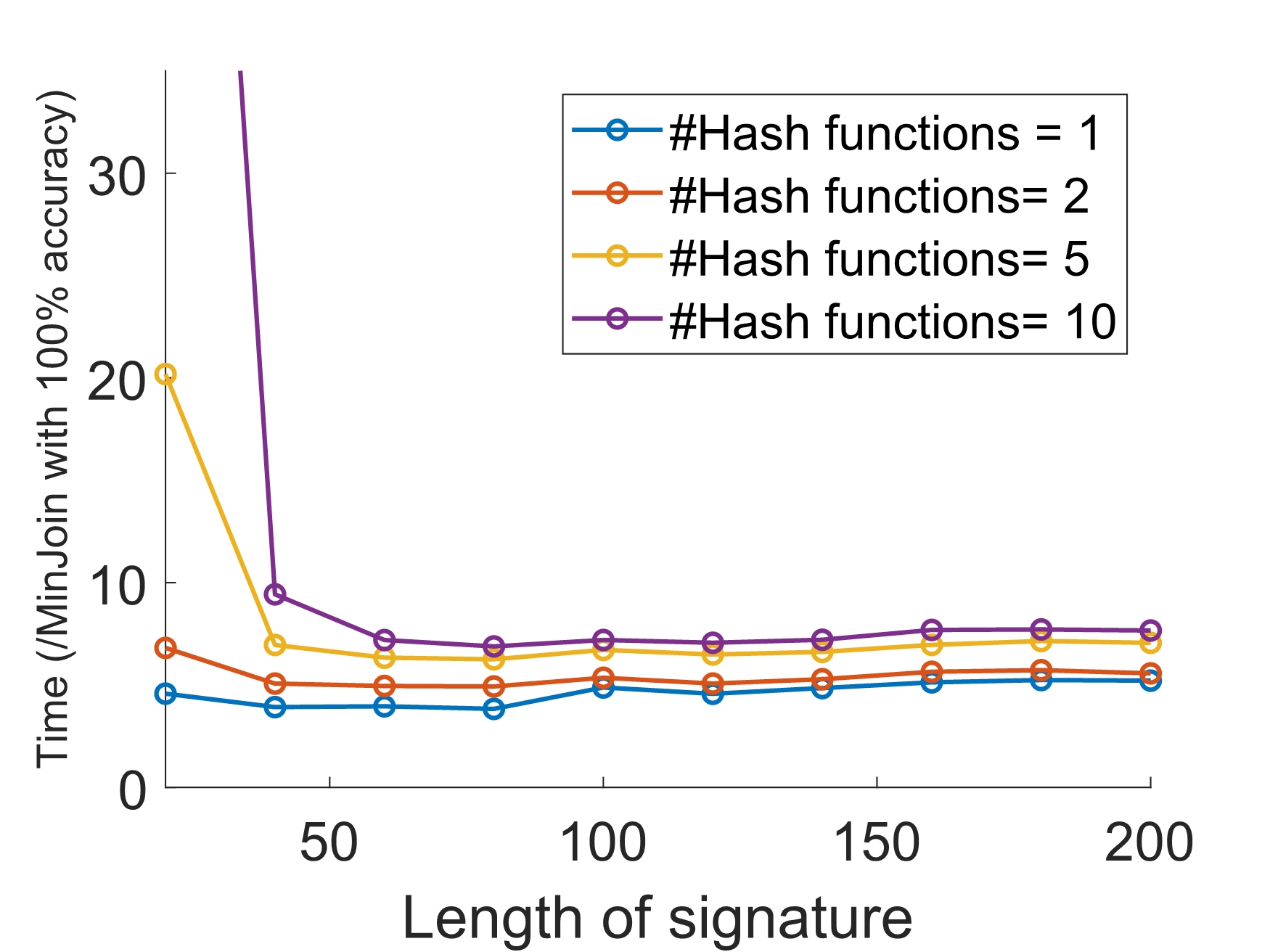}
\centerline{(a) Running time}
\end{minipage}
\begin{minipage}[d]{0.33\linewidth}
\centering
\includegraphics[width=1\textwidth]{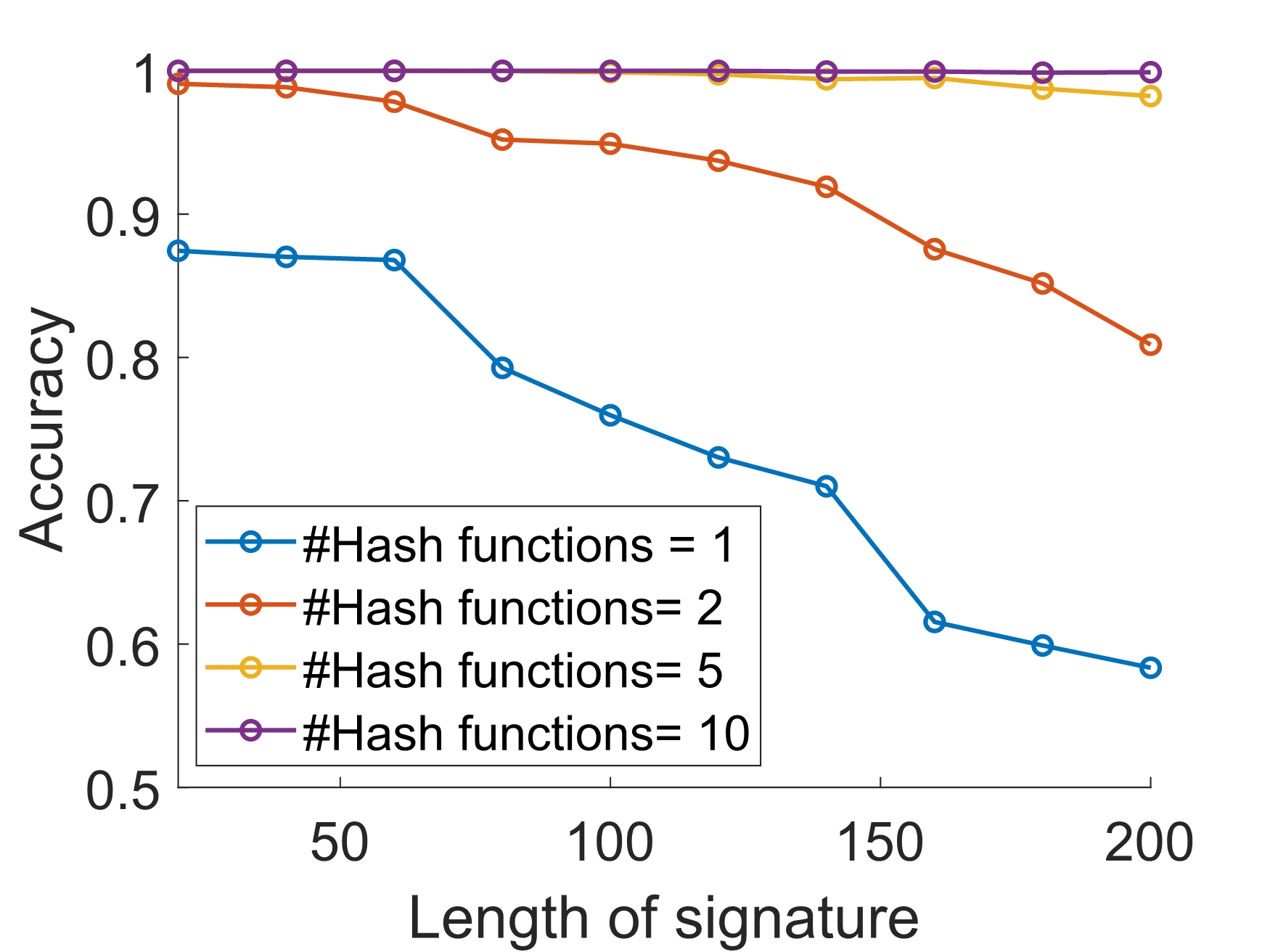}
\centerline{(b) Accuracy}
\end{minipage}
\caption{Performance of the MinHash based algorithm on \genoa\ dataset with $K = 100$. (a) The running time of the MinHash based algorithm as a multiple of that of \mjoin\ at 100$\%$ accuracy. (b) The accuracy of the MinHash based algorithm.}
\label{fig:min}
\end{figure}

Before going to the main body of the experimental study, we try to argue that the folklore MinHash based algorithm is not competitive with \mjoin. The reason that we discuss it separately is that this folklore algorithm has two parameters for which we do not have any guideline for the tuning. We thus try to present its performance by testing different combinations of these parameters.

As mentioned in the introduction, the MinHash based algorithm is straightforward: we convert each string into a set which consists of the hash values of all $q$-grams of the string, and then pick the smallest value as the signature of the string for the subsequent hash join. To boost the accuracy, we can use $\ell$ such MinHash functions, and get $\ell$ signatures for each string.  Applying $\ell$ hash functions to get the signatures is expensive.  A standard optimization method is to use only one hash function, and then select the top-$\ell$ smallest hash values as the signatures.  This is what we use in our experiments.

Figure~\ref{fig:min} shows the running time and accuracy of the MinHash based algorithm when varying the number of hash signatures $\ell$ and the length of signature $q$. The running time is shown as a multiple of \mjoin\ at 100$\%$ accuracy.  We find that the running time and accuracy of the MinHash based algorithm depend on the two parameters $q$ and $\ell$: When increasing parameter $\ell$, both running time and accuracy increase; when increasing parameter $q$, the running time first decreases and then increases a little bit, and the accuracy decreases. We observe the accuracy and running time are sensitive to parameters, and there is no principle on how to select them for edit similarity joins. This is in contrast to \mjoin\ where the only parameter is $T$ (the targeted number of partitions), and we have already discussed how to choose $T$ both theoretically and practically.  Moreover, even we choose the best combination of $\ell$ and $q$, the running time of the MinHash based algorithm is still at least $5$ times of that of \mjoin\ at $100\%$ accuracy. We thus conclude that \mjoin\ outperforms the MinHash based algorithm in all aspects.

\subsection{A Comparison with the State-of-the-Art}
\label{sec:compare}

We now compare \mjoin\ with the state-of-the-art algorithms for edit similarity joins (\qchunk, \pass, \vchunk\ and \ebdjoin).  We will make use of \uniref, \trec\ and \genoa\ for a basic comparison.  These datasets are of modest size so that all algorithms can finish within 24 hours.  We then use larger genome datasets to test the scalability of all algorithms.  

\begin{figure*}[t]
\centering
\begin{minipage}[d]{0.32\linewidth}
\centering
\includegraphics[width=0.95\textwidth]{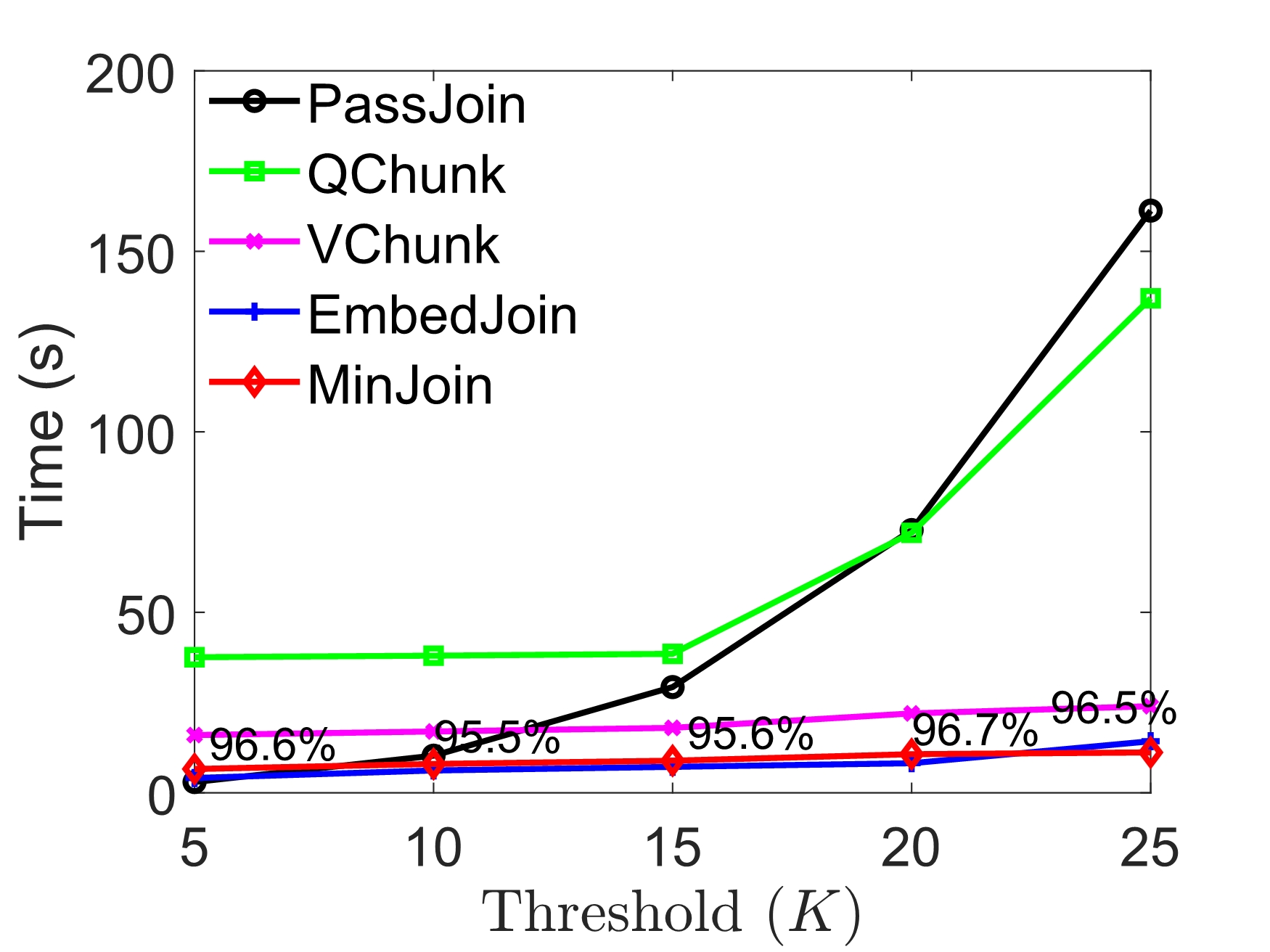}
\centerline{\uniref}
\end{minipage}
\begin{minipage}[d]{0.32\linewidth}
\centering
\includegraphics[width=0.95\textwidth]{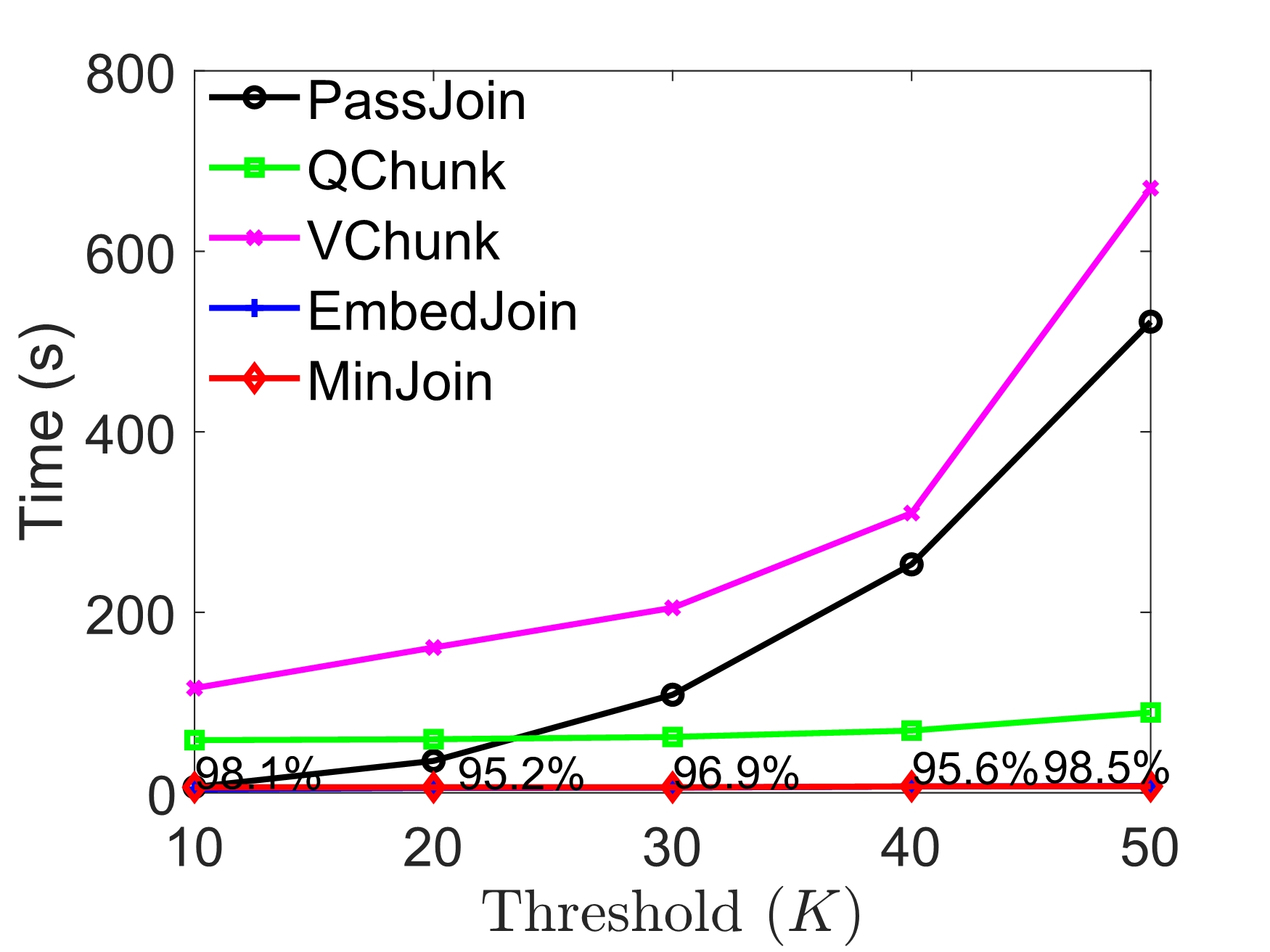}
\centerline{\trec}
\end{minipage}
\begin{minipage}[d]{0.32\linewidth}
\centering
\includegraphics[width=0.95\textwidth]{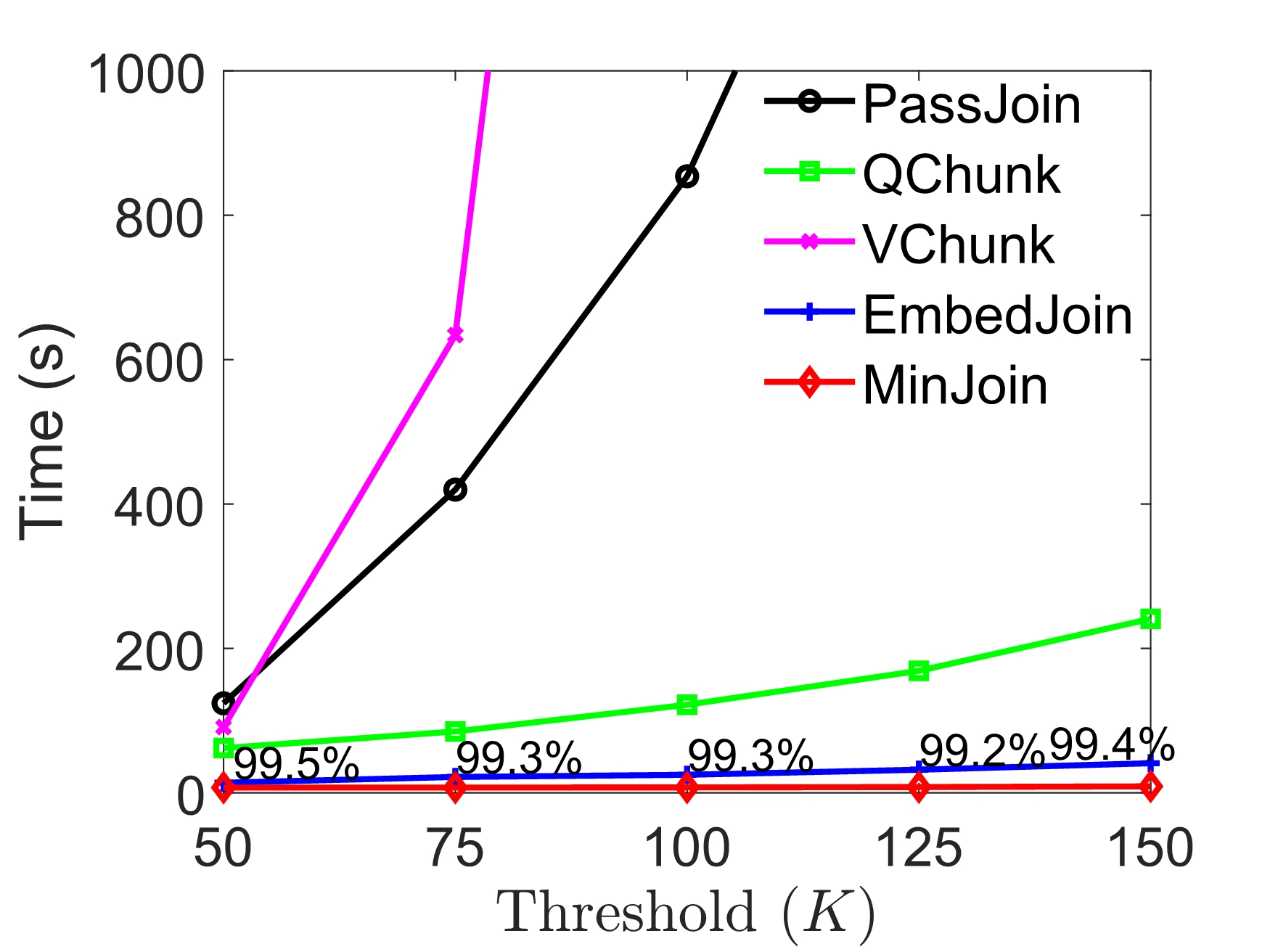}
\centerline{\genoa}
\end{minipage}
\caption{A comparison on running time, varying $K$. The percentages on plots stand for accuracy of \ebdjoin. }
\label{fig:ktime}
\end{figure*}

\begin{figure*}[t]
\centering
	\begin{minipage}[d]{0.32\linewidth}
	\centering
	\includegraphics[width=0.95\textwidth]{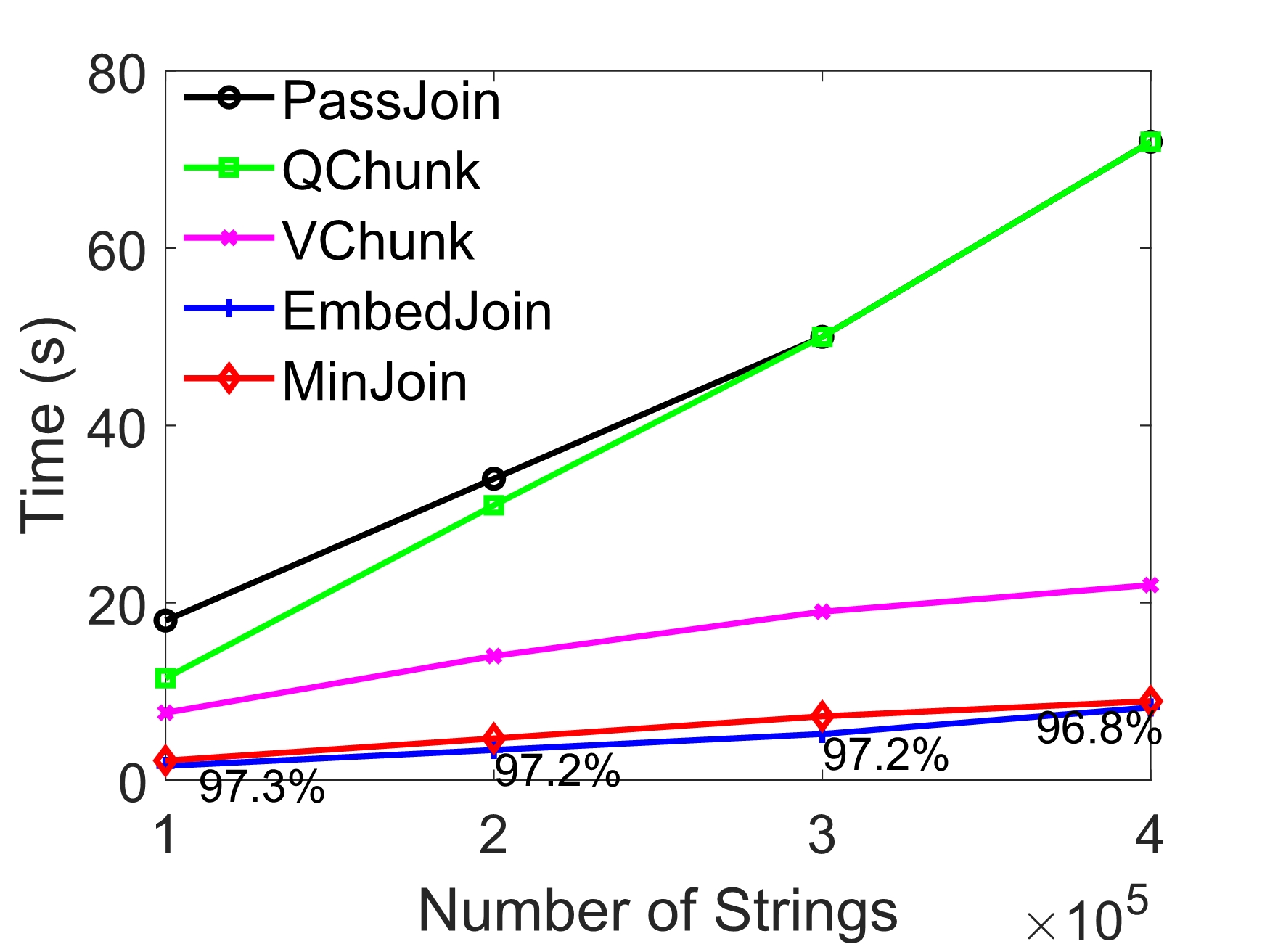}
	\centerline{\uniref\ ($K=20$)}
	\end{minipage}
	\begin{minipage}[d]{0.32\linewidth}
	\centering
	\includegraphics[width=0.95\textwidth]{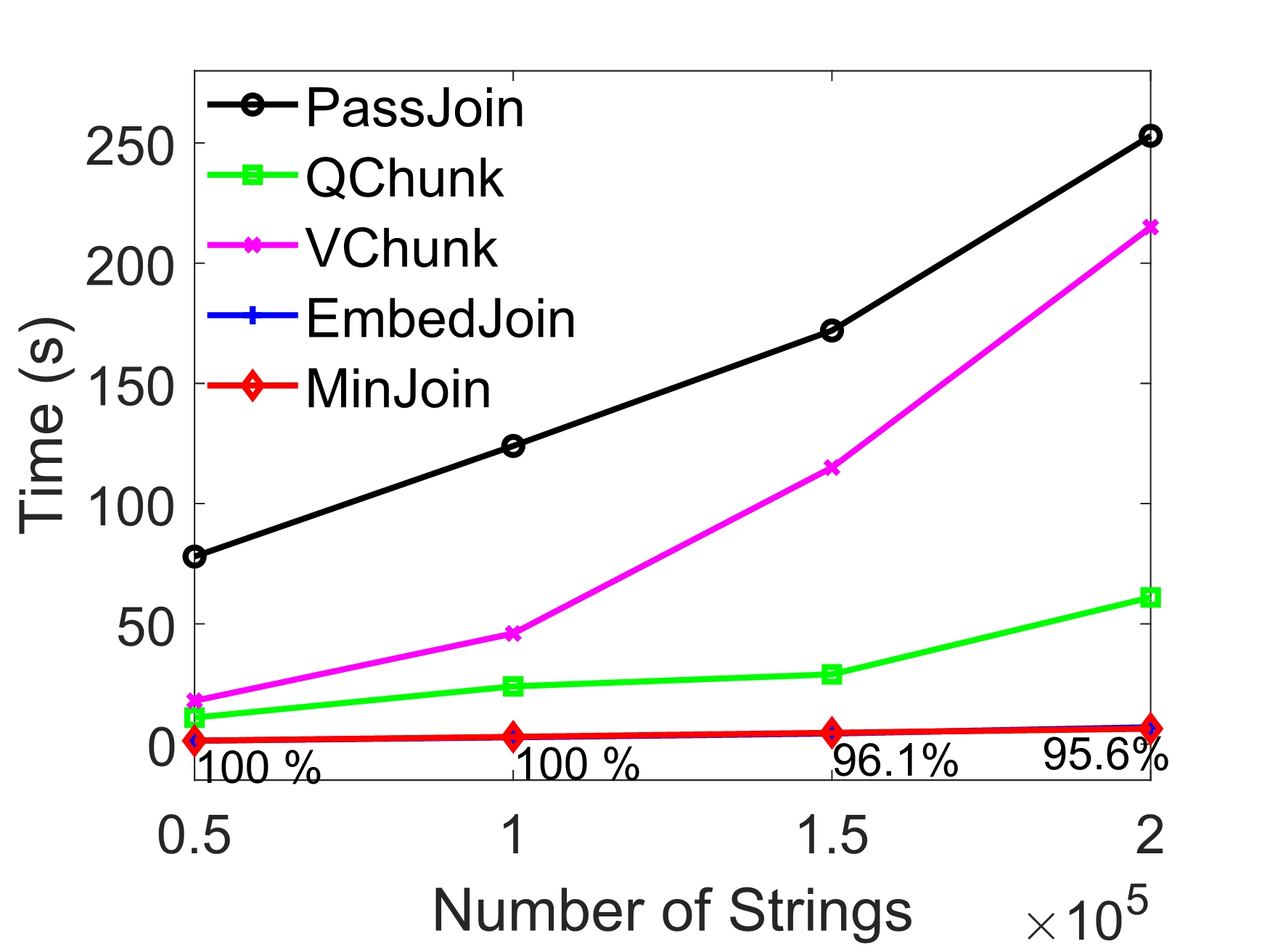}
	\centerline{\trec\ ($K=40$)}
	\end{minipage}
	\begin{minipage}[d]{0.32\linewidth}
	\centering
	\includegraphics[width=0.95\textwidth]{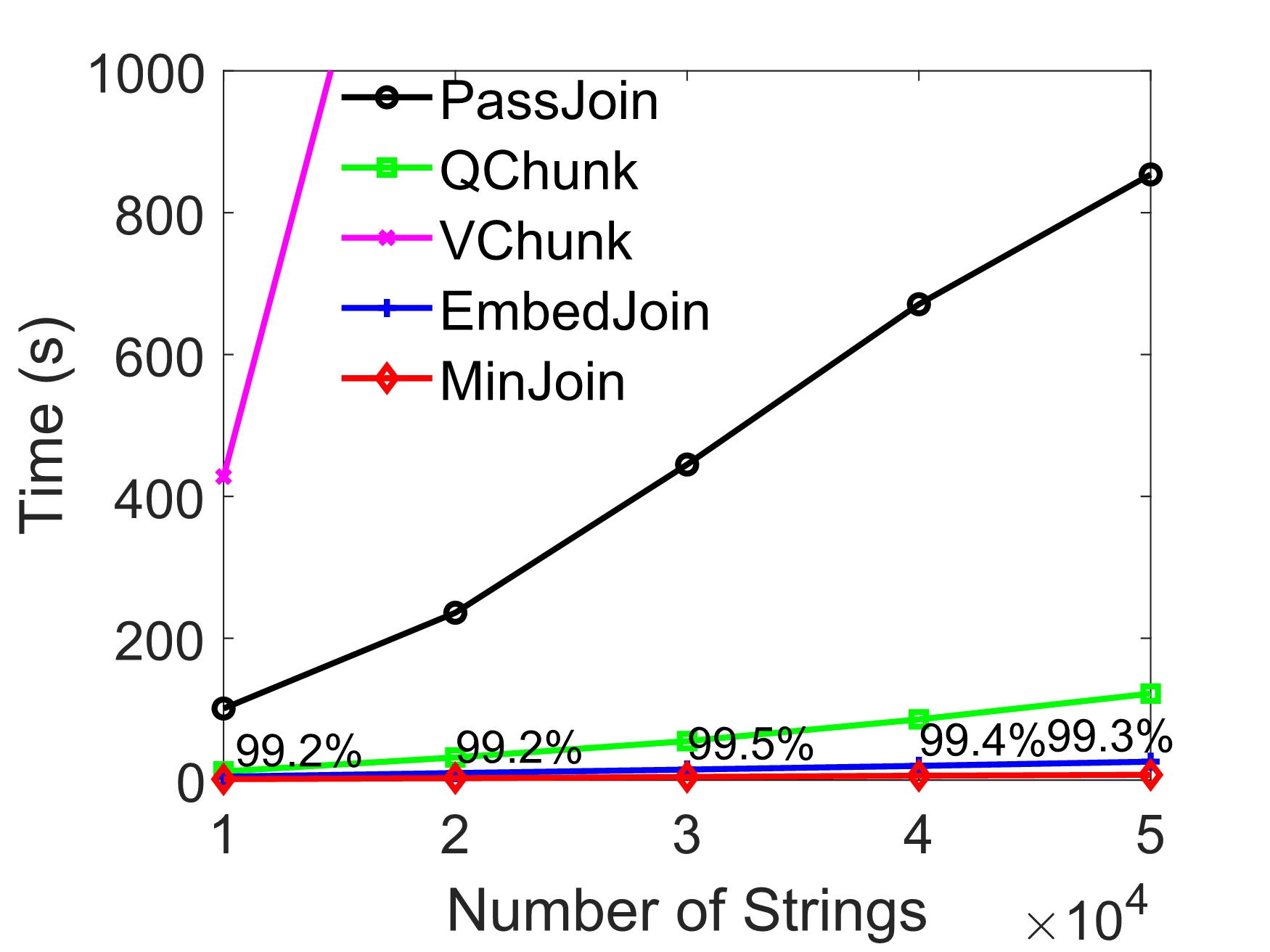}
	\centerline{\genoa\ ($K=100$)}
	\end{minipage}
	\caption{A comparison on running time, varying $n$. The percentages on plots stand for accuracy of \ebdjoin.}
	\label{fig:ntime}
	\end{figure*}

\begin{figure*}[h]
\centering
	\begin{minipage}[d]{0.32\linewidth}
	\centering
	\includegraphics[width=0.95\textwidth]{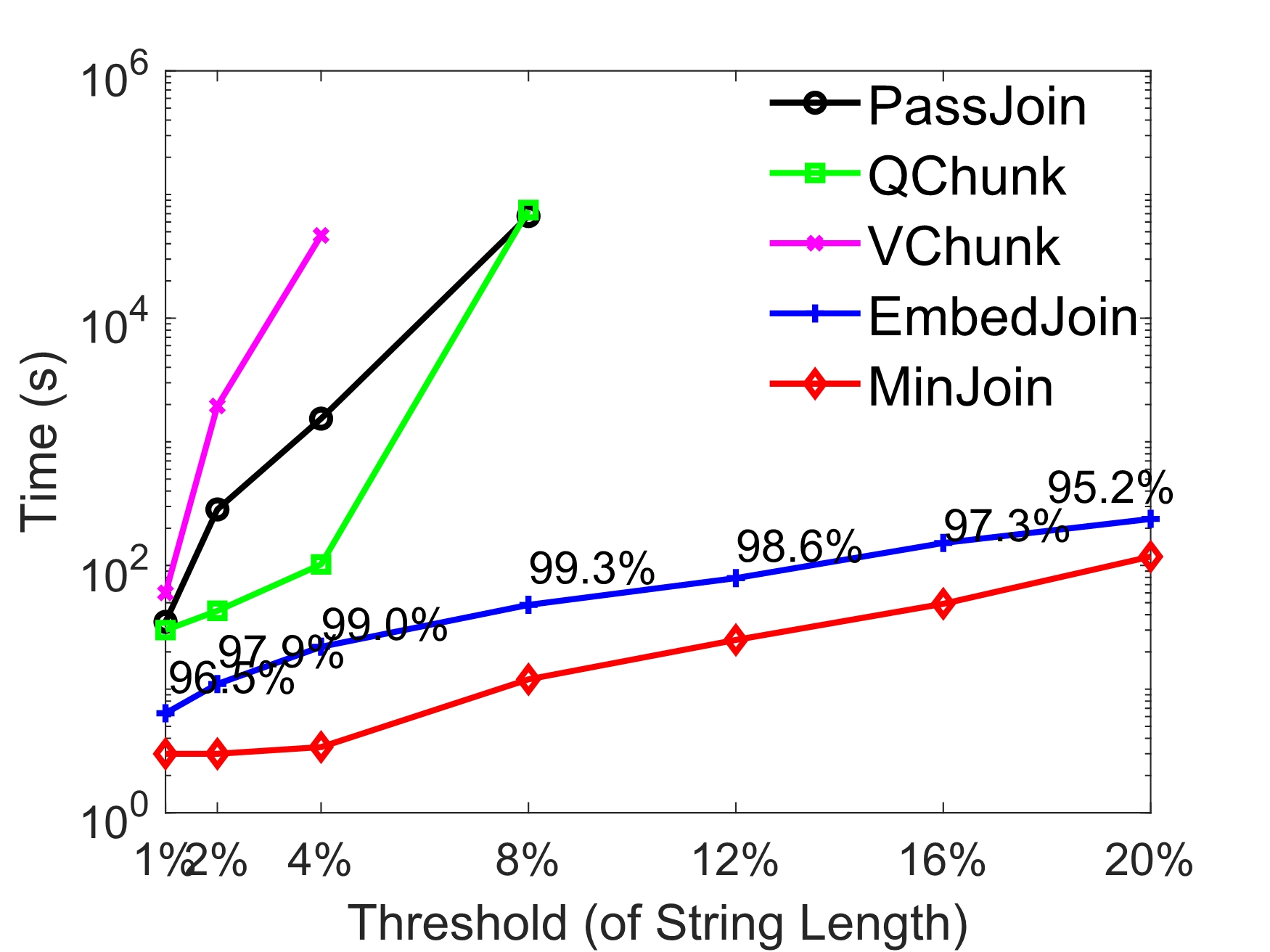}
	\centerline{\genob}
	\end{minipage}
	\begin{minipage}[d]{0.32\linewidth}
	\centering
	\includegraphics[width=0.95\textwidth]{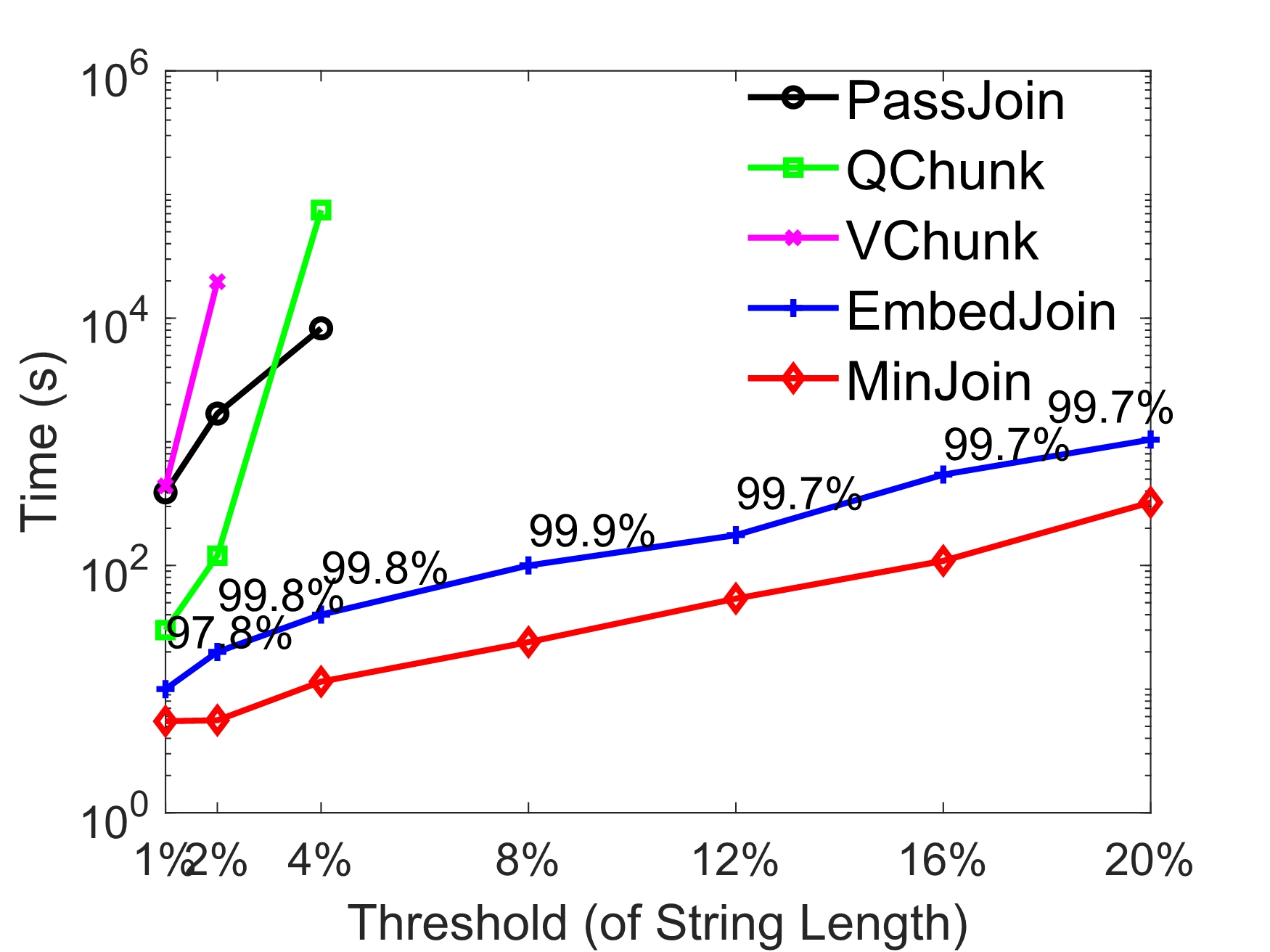}
	\centerline{\genoc}
	\end{minipage}
	\begin{minipage}[d]{0.32\linewidth}
	\centering
	\includegraphics[width=0.95\textwidth]{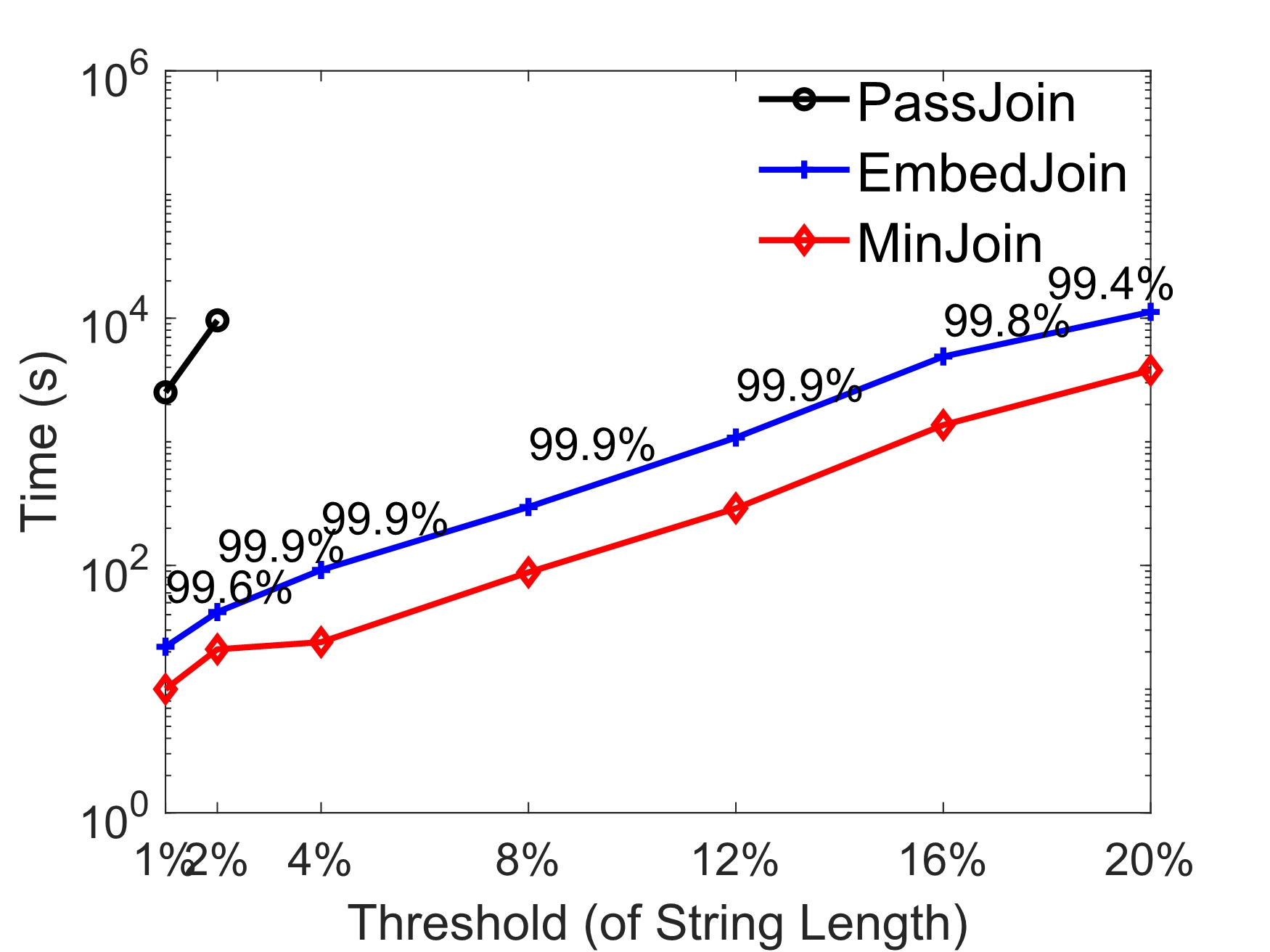}
	\centerline{\genod}
	\end{minipage}
	\caption{Scalability of different algorithms, varying $N$. The percentages on plots are accuracies of \ebdjoin.}
	\label{fig:scalek}
	\end{figure*}

	\begin{figure*}[h]
\centering
	\begin{minipage}[d]{0.32\linewidth}
	\centering
	\includegraphics[width=0.95\textwidth]{dnasmall}
	\centerline{\genob}
	\end{minipage}
	\begin{minipage}[d]{0.32\linewidth}
	\centering
	\includegraphics[width=0.95\textwidth]{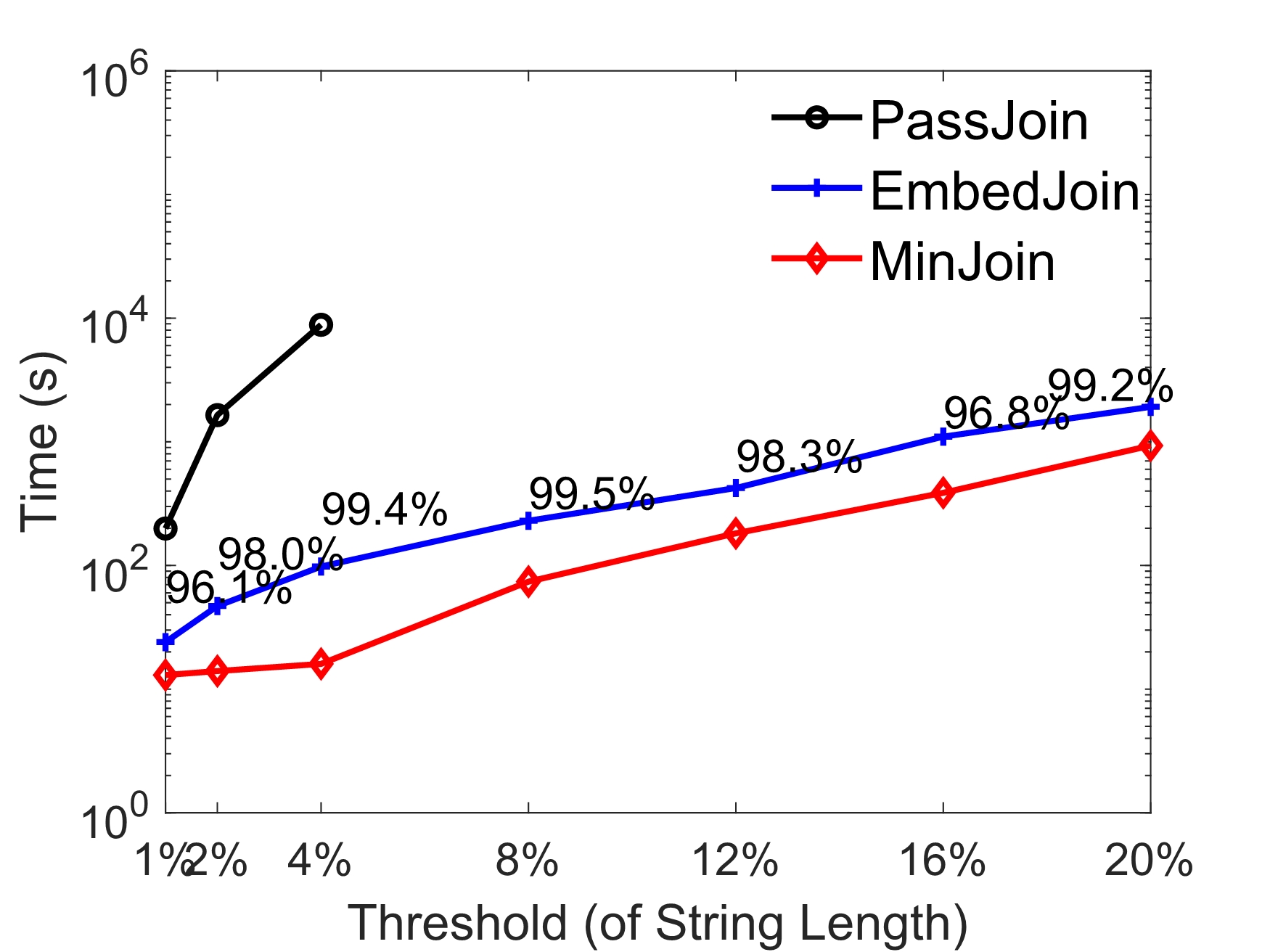}
	\centerline{\genoe}
	\end{minipage}
	\begin{minipage}[d]{0.32\linewidth}
	\centering
	\includegraphics[width=0.95\textwidth]{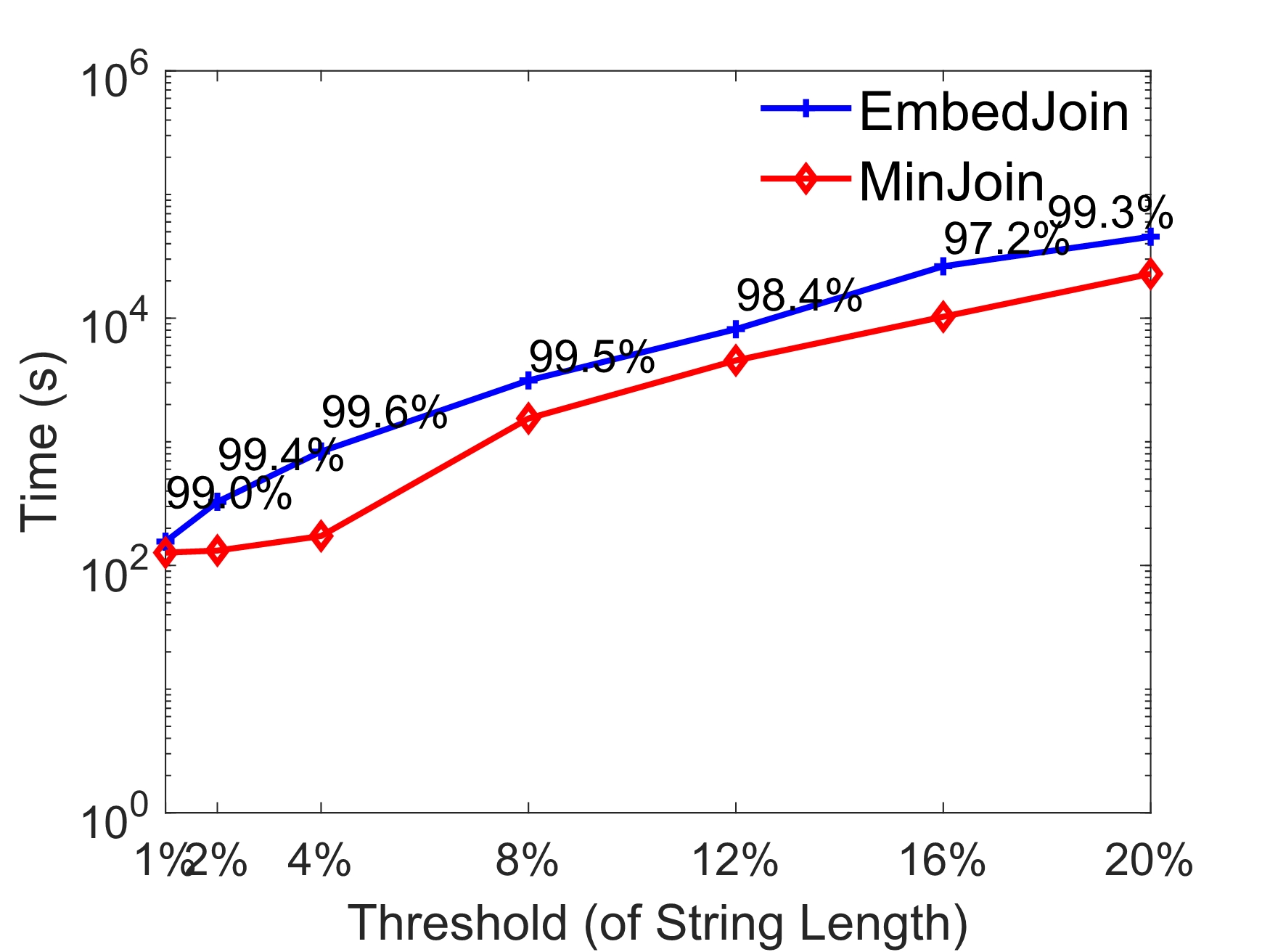}
	\centerline{\genof}
	\end{minipage}
	\caption{Scalability of different algorithms, varying $n$. The percentages on plots are accuracies of \ebdjoin.}
	\label{fig:scalen}	
	\end{figure*}

\paragraph{Effects of the Edit Threshold $K$}
Figure~\ref{fig:ktime} presents the running time of different algorithms on \uniref, \trec\ and \genoa\ when varying the edit threshold $K$.  Compared with \ebdjoin, \mjoin\ clearly has the advantage on the accuracy (100\% versus 95-99\%).  The running time of \mjoin\ is similar to \ebdjoin\ on \uniref\ and \trec, and is better than \ebdjoin\ by a factor of 4.5 on \genoa\ ($K = 150$).  We observe that \mjoin\ has a significant advantage over all the exact algorithms on running time: \mjoin\ outperforms the best exact algorithm by a factor of 2.3 in \uniref\ ($K = 25$), 12.3 on \trec\ ($K = 50$), and 26.7 on \genoa\ ($K = 150$).  The running time of \pass\ increases quickly when $K$ becomes large; this is consistent to the theory that the query time in \pass\ for each string is proportional to $K^3$.  \vchunk\ performs relatively well on \uniref,  but much worse on \trec\ and \genoa.  This may be because the preprocessing time of \vchunk\ has a quadratic dependence on string length $N$, which is larger in \trec\ and \genoa\ than \uniref. 

\paragraph{Effects of the Input Size $n$}
Figure~\ref{fig:ntime} presents the running time of different algorithms on \uniref, \trec\ and \genoa\ when varying the number of input strings $n$.  \mjoin\ again has similar running time as \ebdjoin\ on \uniref\ and \trec, and much better on \genoa\ (plus the accuracy advantage).  The running time of \mjoin\ is better than the best exact algorithm by a factor of 2.2 on \uniref\ ($n = 400,000$), 9.5 on \trec\ ($n = 200,000$), and $16.2$ on \genoa\ ($n = 50,000$).  The trends of running time of all algorithms increase near linearly in terms of $n$, except \vchunk\ whose performance deteriorates significantly when $n$ increases on \trec\ and \genoa, which may again due to the expensive preprocessing step.

\paragraph{Scalability of the Algorithms}  Finally we test all algorithms on larger datasets.  Figure~\ref{fig:scalek} presents the results of the running time when we scale string length up to 20,000 and the edit threshold $K$ up to 20\% of the string length.
Figure~\ref{fig:scalen} presents the results when we scale the number of strings up to 320,000, and $K$ up to 20\% of the string length.  The first plot of Figure~\ref{fig:scalen} is just a repeat of that of Figure~\ref{fig:scalek}. For \mjoin\ we always set the number of targeted partition $T$ to be $K/5$, which already makes the accuracy of \mjoin\ to be 100\% on those points where there is at least one exact algorithm that can finish.

We note that some algorithms cannot produce some of the points, which may be because they cannot finish within 24 hours, or there are some implementation issues (e.g., memory overflow).  In cases when there is no exact algorithm that can finish in time, the accuracy of \ebdjoin\ is computed using the result returned by \mjoin\ as the ground truth.  

We observe that \mjoin\ generally outperforms \ebdjoin\ by $2 \sim 5$ times on the running time.  The advantage slightly decreases when the number of strings $n$ or the string length $N$ increases.  This is because when $n$ or $N$ increases, the verification time ($O(NK)$ per pair where $K$ is also proportional to $N$ in our plots) will increase faster than other parts of the algorithm.  On the other hand, the accuracy of \ebdjoin, using \mjoin\ as the baseline, is about 96\%-99\%.

All the exact algorithms do not scale well on these large datasets. On the smallest dataset \genob, \pass\ and \qchunk\ can run up to the 8\% edit threshold, while \vchunk\ can only go up to the 4\% threshold.  Their running times deteriorate significantly when $K$ increases. Only \pass\ can produce some points on \genod\ and \genoe.  On \genof\ none of the exact algorithms can finish within 24 hours.

\begin{figure*}
\centering
\begin{minipage}[d]{0.32\linewidth}
\centering
\includegraphics[width=0.95\textwidth]{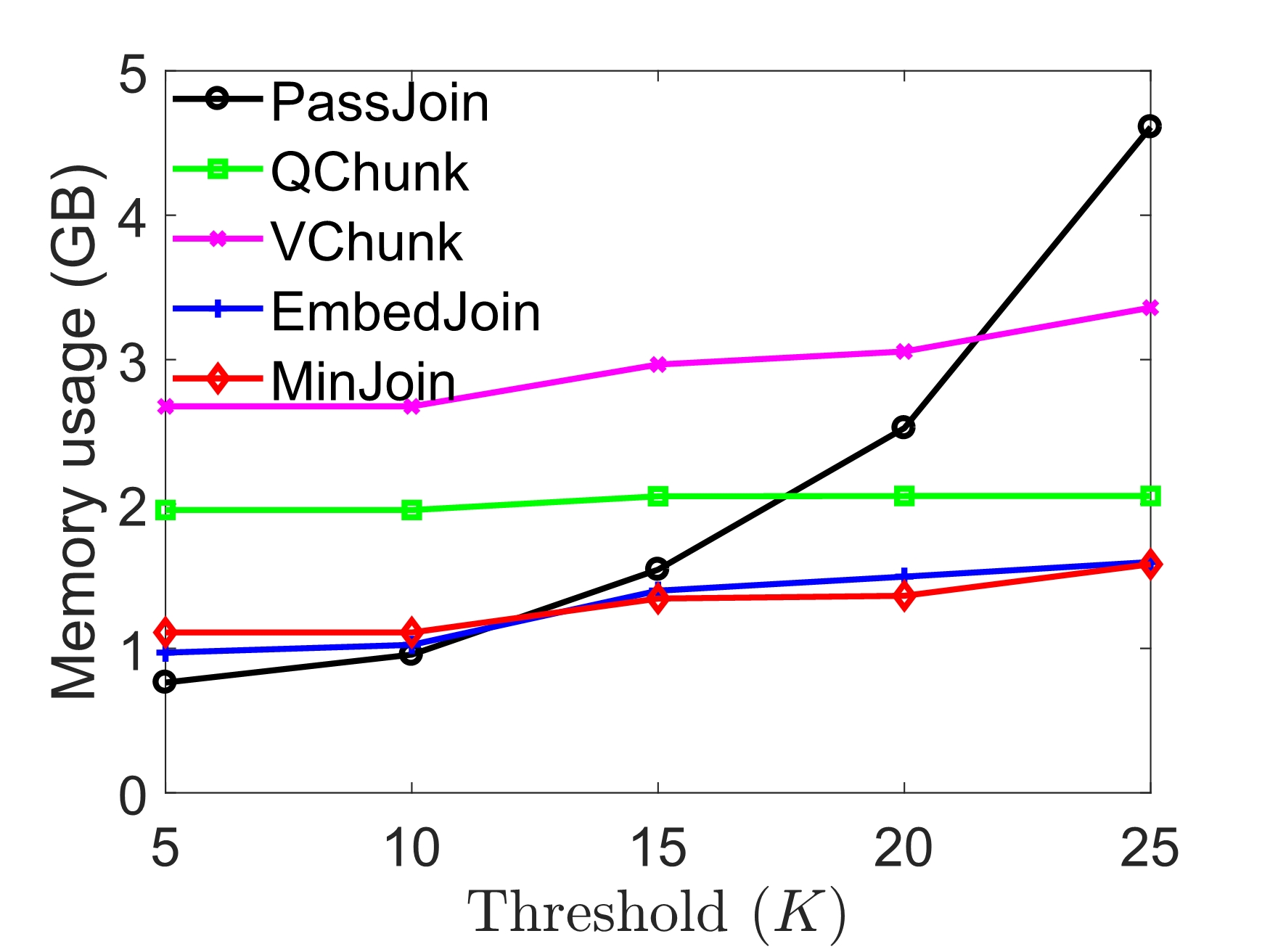}
\centerline{\uniref}
\end{minipage}
\begin{minipage}[d]{0.32\linewidth}
\centering
\includegraphics[width=0.95\textwidth]{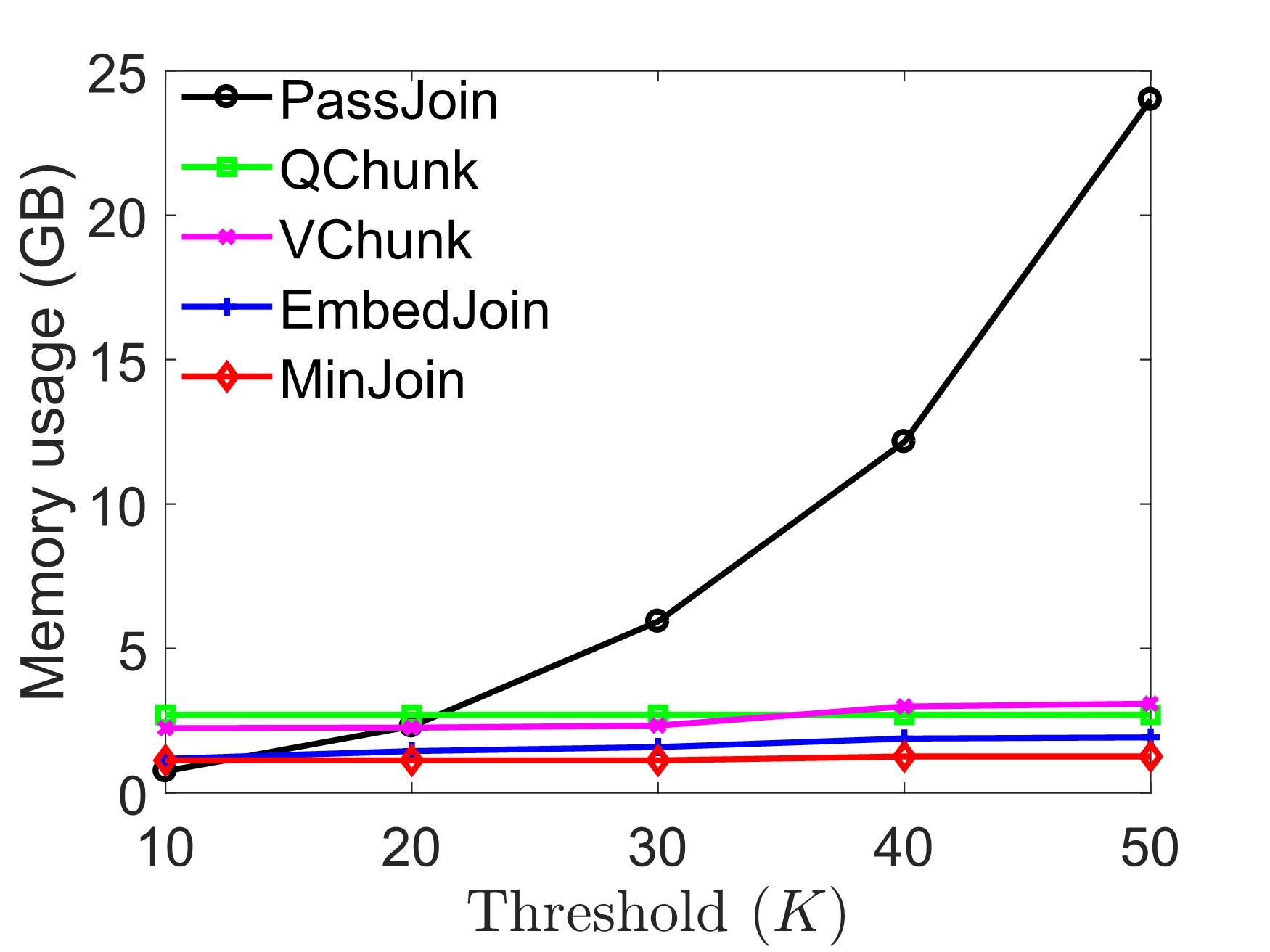}
\centerline{\trec}
\end{minipage}
\begin{minipage}[d]{0.32\linewidth}
\centering
\includegraphics[width=0.95\textwidth]{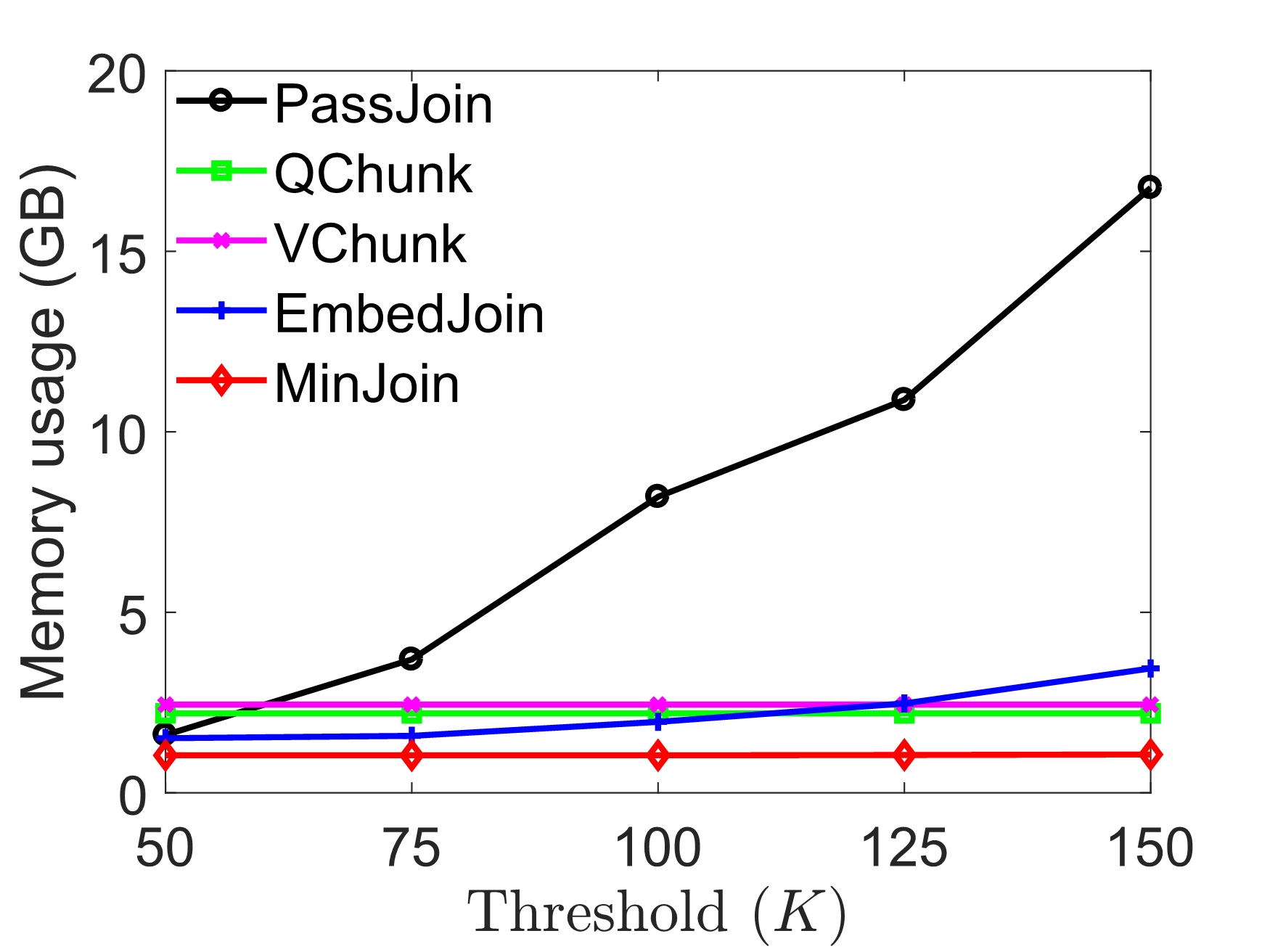}
\centerline{\genoa}
\end{minipage}
\caption{A comparison on memory usage, varying $K$. }
\label{fig:kmem}
\end{figure*}

\begin{figure*}
\centering
\begin{minipage}[d]{0.32\linewidth}
\centering
\includegraphics[width=0.95\textwidth]{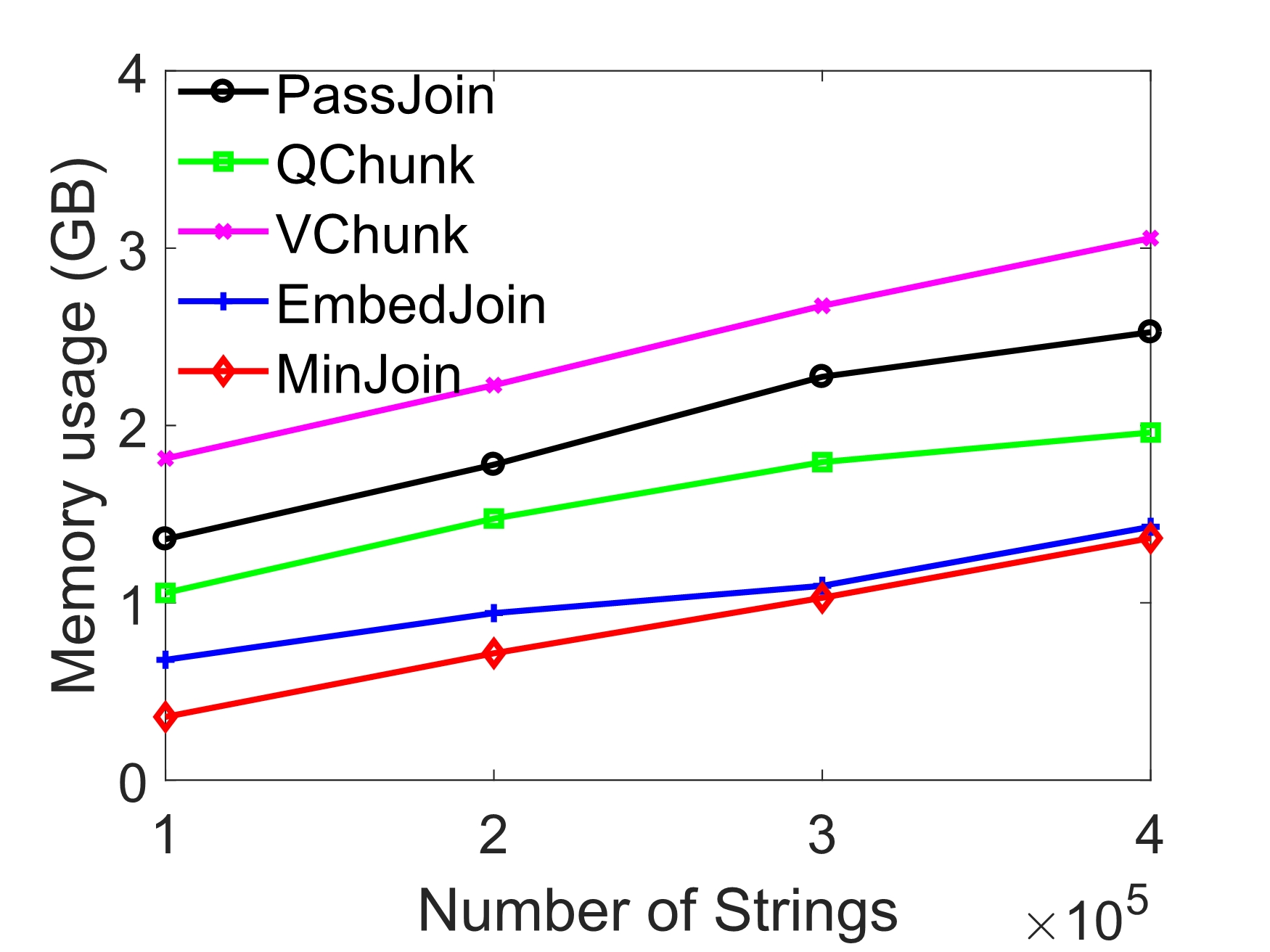}
\centerline{\uniref\ ($K=20$)}
\end{minipage}
\begin{minipage}[d]{0.32\linewidth}
\centering
\includegraphics[width=0.95\textwidth]{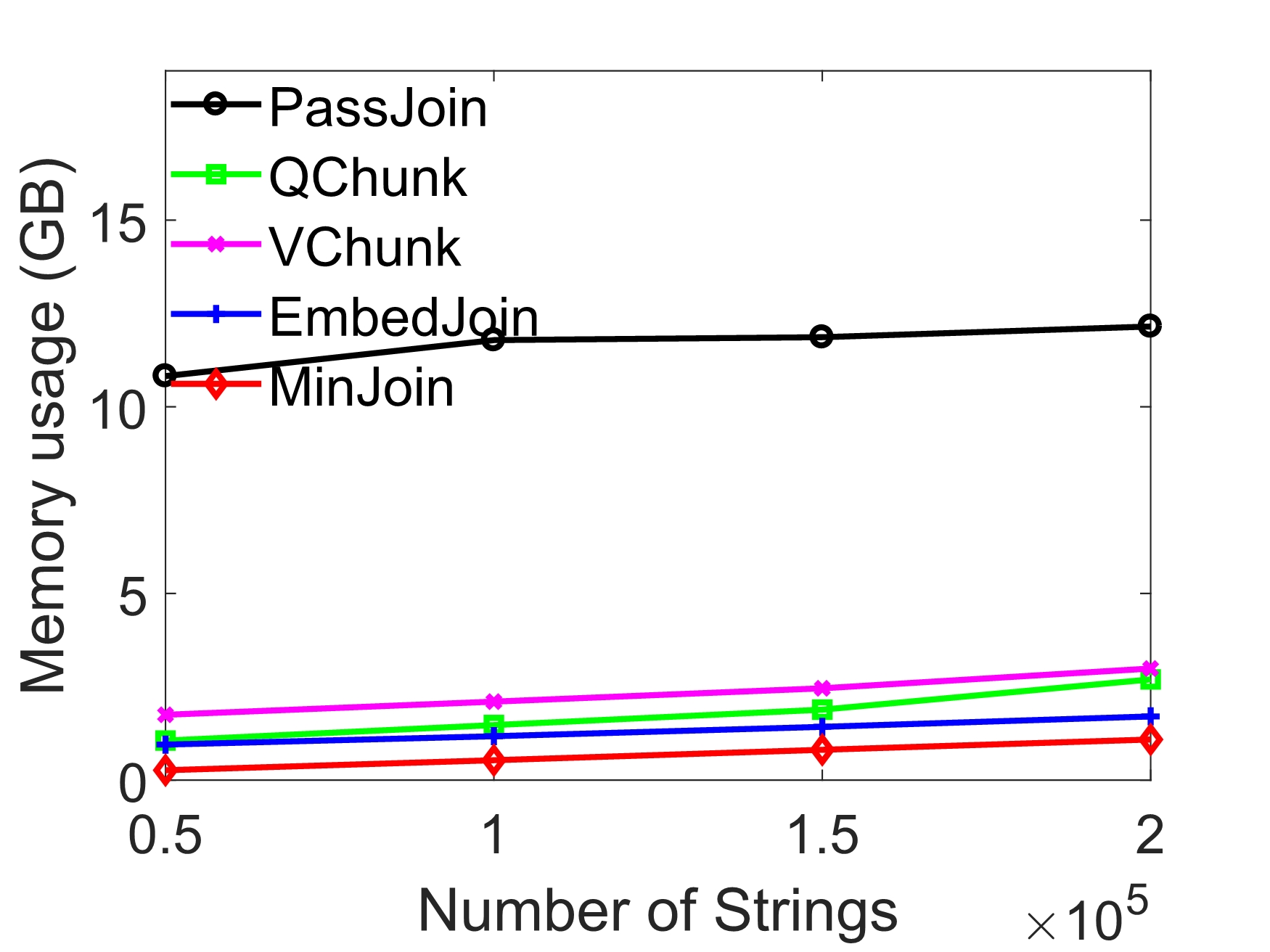}
\centerline{\trec\ ($K=40$)}
\end{minipage}
\begin{minipage}[d]{0.32\linewidth}
\centering
\includegraphics[width=0.95\textwidth]{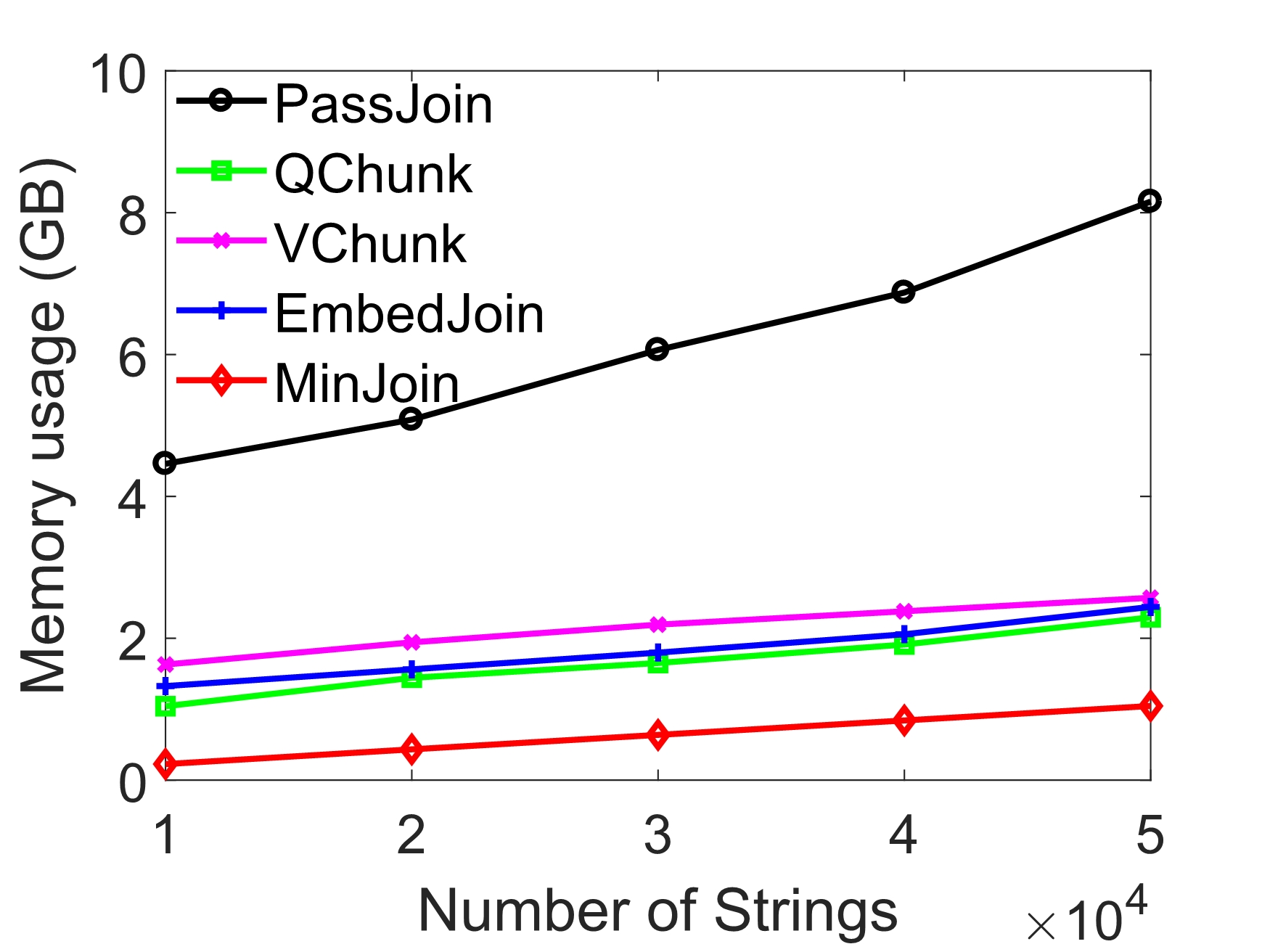}
\centerline{\genoa\ ($K=100$)}
\end{minipage}
\caption{A comparison on memory usage, varying $n$.}
\label{fig:nmem}
\end{figure*}

\paragraph{Memory Usage}  We have also compared the memory usage of all tested algorithms. Figure~\ref{fig:kmem} and Figure~\ref{fig:nmem} present the memory usage of different algorithms on \uniref, \trec\ and \genoa\ when varying edit threshold $K$ and the number of input strings $n$. While the difference on the memory usage is not as large as running time, \mjoin\ still performs the best among all algorithms. The performance of \pass\ is clearly worse than others on \trec\ and \genoa.

\section{Conclusion}
\label{sec:conclusion}

In this paper we have presented \mjoin, an algorithm for edit similarity joins based on string partition using local hash minima.  \mjoin\ has rigorous mathematical properties, and significantly outperforms previous methods on long strings with large edit thresholds.   We feel that local hash minima based string partition is a natural and elegant way for solving the edit similarity join problem: it can be applied to each string independently by a linear scan, without any synchronization between strings or global statistics of the datasets.  It also works very well with a simple hash join data structure for computing the candidate string pairs.  Moreover, even \mjoin\ is a randomized algorithm, it can easily achieve perfect accuracy on all of the datasets that we have tested.  We believe \mjoin\ is the right choice for edit similarity joins in many applications.

\bibliographystyle{acm}
\bibliography{paper}

\begin{thebibliography}{10}

\bibitem{AGK06}
{\sc Arasu, A., Ganti, V., and Kaushik, R.}
\newblock Efficient exact set-similarity joins.
\newblock In {\em VLDB\/} (2006), pp.~918--929.

\bibitem{BMS07}
{\sc Bayardo, R.~J., Ma, Y., and Srikant, R.}
\newblock Scaling up all pairs similarity search.
\newblock In {\em WWW\/} (2007), pp.~131--140.

\bibitem{BHSH07}
{\sc Bocek, T., Hunt, E., Stiller, B., and Hecht, F.}
\newblock {\em Fast similarity search in large dictionaries}.
\newblock University, 2007.

\bibitem{CPZ97}
{\sc Ciaccia, P., Patella, M., and Zezula, P.}
\newblock M-tree: An efficient access method for similarity search in metric
  spaces.
\newblock In {\em VLDB\/} (1997), pp.~426--435.

\bibitem{GJKMS01}
{\sc Gravano, L., Ipeirotis, P.~G., Jagadish, H.~V., Koudas, N., Muthukrishnan,
  S., and Srivastava, D.}
\newblock Approximate string joins in a database (almost) for free.
\newblock In {\em VLDB\/} (2001), pp.~491--500.

\bibitem{JLFL14}
{\sc Jiang, Y., Li, G., Feng, J., and Li, W.}
\newblock String similarity joins: An experimental evaluation.
\newblock {\em {PVLDB} 7}, 8 (2014), 625--636.

\bibitem{LLL08}
{\sc Li, C., Lu, J., and Lu, Y.}
\newblock Efficient merging and filtering algorithms for approximate string
  searches.
\newblock In {\em ICDE\/} (2008), pp.~257--266.

\bibitem{LDW11}
{\sc Li, G., Deng, D., Wang, J., and Feng, J.}
\newblock {PASS-JOIN:} {A} partition-based method for similarity joins.
\newblock {\em {PVLDB} 5}, 3 (2011), 253--264.

\bibitem{QWL11}
{\sc Qin, J., Wang, W., Lu, Y., Xiao, C., and Lin, X.}
\newblock Efficient exact edit similarity query processing with the asymmetric
  signature scheme.
\newblock In {\em SIGMOD\/} (2011), pp.~1033--1044.

\bibitem{Roberts2013}
{\sc Roberts, R.~J., Carneiro, M.~O., and Schatz, M.~C.}
\newblock The advantages of smrt sequencing.
\newblock {\em Genome biology 14}, 6 (2013), 405.

\bibitem{Ukkonen85}
{\sc Ukkonen, E.}
\newblock Algorithms for approximate string matching.
\newblock {\em Information and Control 64}, 1-3 (1985), 100--118.

\bibitem{WDG14}
{\sc Wandelt, S., Deng, D., Gerdjikov, S., Mishra, S., Mitankin, P., Patil, M.,
  Siragusa, E., Tiskin, A., Wang, W., Wang, J., and Leser, U.}
\newblock State-of-the-art in string similarity search and join.
\newblock {\em {SIGMOD} Record 43}, 1 (2014), 64--76.

\bibitem{WLF10}
{\sc Wang, J., Li, G., and Feng, J.}
\newblock Trie-join: Efficient trie-based string similarity joins with
  edit-distance constraints.
\newblock {\em {PVLDB} 3}, 1 (2010), 1219--1230.

\bibitem{WLF12}
{\sc Wang, J., Li, G., and Feng, J.}
\newblock Can we beat the prefix filtering?: an adaptive framework for
  similarity join and search.
\newblock In {\em SIGMOD\/} (2012), pp.~85--96.

\bibitem{WQX13}
{\sc Wang, W., Qin, J., Xiao, C., Lin, X., and Shen, H.~T.}
\newblock Vchunkjoin: An efficient algorithm for edit similarity joins.
\newblock {\em {IEEE} Trans. Knowl. Data Eng. 25}, 8 (2013), 1916--1929.

\bibitem{XWL08}
{\sc Xiao, C., Wang, W., and Lin, X.}
\newblock Ed-join: an efficient algorithm for similarity joins with edit
  distance constraints.
\newblock {\em {PVLDB} 1}, 1 (2008), 933--944.

\bibitem{ZZ17}
{\sc Zhang, H., and Zhang, Q.}
\newblock Embedjoin: Efficient edit similarity joins via embeddings.
\newblock {\em {KDD}\/} (2017), 585--594.

\end{thebibliography}

\end{document}